\newcommand{\tab}{\hspace*{1em}}
\theoremstyle{plain}
\newtheorem{thm}{Theorem$\!$}
\newtheorem{clm}[thm]{Claim$\!$}
\newtheorem{lem}[thm]{Lemma$\!$}
\newtheorem{prop}[thm]{Proposition$\!$}
\newtheorem{cor}[thm]{Corollary$\!$}
\newtheorem{defn}[thm]{Definition$\!$}
\newtheorem{xmpl}{Example$\!$}
\newtheorem{cnstr}{Construction$\!$}
\newtheorem{alg}{Algorithm$\!$}
\newcounter{enumrom}
\renewcommand{\theenumrom}{(\roman{enumrom})}
\renewcommand{\@endtheorem}{\endtrivlist}
\renewcommand{\thefigure}{{\@arabic\c@figure}}
\renewcommand{\fnum@figure}{{\bf Figure\,\thefigure}}
\newcommand{\cC}{{\cal C}}
\newcommand{\cM}{{\cal M}}
\DeclareMathOperator{\spun}{span}
\begin{document}


\title{\textbf{MDS Array Codes with Optimal Rebuilding}
\vspace*{-0.2ex}}

\author{\IEEEauthorblockN{Itzhak Tamo\IEEEauthorrefmark{1}\IEEEauthorrefmark{2}, Zhiying Wang\IEEEauthorrefmark{1} and Jehoshua Bruck\IEEEauthorrefmark{1}}
\IEEEauthorblockA{\IEEEauthorrefmark{1}Electrical Engineering Department,
California Institute of Technology,
Pasadena, CA 91125, USA \\}
\IEEEauthorblockA{\IEEEauthorrefmark{2}Electrical and Computer Engineering,
Ben-Gurion University of the Negev,
Beer Sheva 84105, Israel\\}
{\it \{tamo, zhiying, bruck\}@caltech.edu}\vspace*{-2.0ex}}

\maketitle

\begin{abstract}
MDS array codes are widely used in storage systems to protect data against erasures. We address the \emph{rebuilding ratio} problem, namely, in the case of erasures, what is the the fraction of the remaining information that needs to be accessed in order to rebuild \emph{exactly} the lost information? It is clear that when the number of erasures equals the maximum number of erasures that an MDS code can correct then the rebuilding ratio is $1$ (access all the remaining information). However, the interesting (and more practical) case is when the number of erasures is smaller than the erasure correcting capability of the code. For  example, consider an MDS code that can correct two erasures: What is the smallest amount of information that one needs to access in order to correct a single erasure? Previous work showed that the rebuilding ratio is bounded between $\frac{1}{2}$ and $\frac{3}{4}$, however, the exact value was left as an open problem. In this paper, we solve this open problem and prove that for the case of a single erasure with a $2$-erasure correcting code, the rebuilding ratio is $\frac{1}{2}$. In general, we construct a new family of $r$-erasure correcting MDS array codes that has optimal rebuilding ratio of $\frac{1}{r}$ in the case of a single erasure. Our array codes have efficient encoding and decoding algorithms (for the case $r=2$ they use a finite field of size $3$) and an optimal update property.
\end{abstract}



\section{Introduction}
Erasure-correcting codes are the basis of the ubiquitous RAID schemes for storage systems, where disks correspond to symbols in the code. Specifically, RAID schemes are based on MDS (maximum distance separable) array codes that enable optimal storage and efficient encoding and decoding algorithms. With $r$ redundancy symbols, an MDS code is able to reconstruct the original information if no more than $r$ symbols are erased. An array code is a two dimensional array, where each column corresponds to a symbol in the code and is stored in a disk in the RAID scheme. We are going to refer to a disk/symbol as a node or a column interchangeably, and an entry in the array as an element. Examples of MDS array codes are EVENODD \cite{Shuki-evenodd,Blaum96mdsarray}, B-code \cite{B-code}, X-code \cite{ x-code}, RDP \cite{RDP-code}, and STAR-code \cite{star-code}.

Suppose that some nodes are erased in an MDS array code, we will rebuild them by accessing (reading) some information in the surviving nodes. The fraction of the accessed information in the surviving nodes is called the \emph{rebuilding  ratio}. If $r$ nodes are erased, then the rebuilding ratio is $1$ since we need to read all the remaining information. However, is it possible to lower this ratio for less than $r$ erasures? For example, Figure \ref{fig:firstFigure} shows the rebuilding of the first \emph{systematic} (information) node for an MDS code with $4$ information elements and $2$ redundancy nodes, which requires the transmission of 3 elements. Thus the rebuilding ratio is $1/2$.

\begin{figure}[htp]
	\centering
	\includegraphics[width=.47\textwidth]{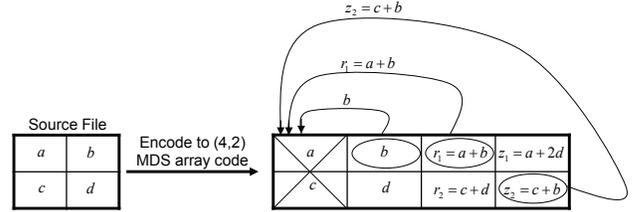}
\caption{Rebuilding of a $(4,2)$ MDS array code over $\mathbb{F}_{3}$. Assume the first node (column) is erased. }
	\label{fig:firstFigure}
	\vspace{-0.5cm}
	\end{figure}

In \cite{ Dimakis2010,Wu07deterministicregenerating}, a related problem is discussed: the nodes are assumed to be distributed and fully connected in a network, and the \emph{repair bandwidth} is defined as the minimum amount of data needed to transmit in the network in order to retain the MDS property.  Note that one block of data transmitted can be a function of several blocks of data. In addition, retaining MDS property does not imply rebuilding the original erased node, whereas we restrict our problem to \emph{exact} rebuilding. Therefore, the repair bandwidth is a lower bound of the rebuilding ratio.

An $(n,k)$ MDS code has $n$ nodes in each codeword and contains $k$ nodes of information and $r=n-k$ nodes of redundancy. A lower bound for the repair bandwidth was shown as \cite{Dimakis2010}
\begin{equation} \label{eq:tradeoff}
\frac{\cM}{k}\cdot \frac{n-1}{n-k} \ ,
\end{equation}
where $\cM$ is the total amount of information. It can be verified that Figure \ref{fig:firstFigure} matches this lower bound. A number of works addressed the repair bandwidth problem \cite{Dimakis2010,5206008,Dimakis-interference-alignment,Kumar09,Suh-alignment,Kumar2009,Changho-Suh2010,Cadambe2010,Wu07deterministicregenerating,Rashmi11}, and it was shown by interference alignment in \cite{Suh-alignment, Kumar2009} that this bound is asymptotically achievable for exact repair. Instead of trying to construct MDS codes that can be easily rebuilt, a different approach \cite{zhiying10, XiangXLC10} was used by trying to find ways to rebuild existing families of MDS array codes. The ratio  of rebuilding a single systematic node was shown to be $\frac{3}{4}+o(1)$ for EVENODD or RDP codes. However, from the lower bound of \eqref{eq:tradeoff} the ratio is $1/2$.

The main contribution of this paper is the first explicit construction of systematic $(n,k)$ MDS array codes for any constant $r=n-k$, which achieves optimal rebuilding ratio of $\frac{1}{r}$. We call them \emph{intersecting zigzag sets codes (IZS codes)}. The parity symbols are constructed by linear combinations of a set of information symbols, such that each information symbol is contained exactly once in each parity node. These codes have a variety of advantages: 1) they are systematic codes, and it is easy to retrieve information;
2) they are have high code rate $k/n$, which is commonly required in storage systems; 3) the encoding and decoding of the codes can be easily implemented (for $r=2$, the code uses finite field of size 3); 4) they match the lower bound (\ref{eq:tradeoff}) when rebuilding a systematic node; 5) the rebuilding of a failed node requires simple computation and access to only $1/(n-k)$ of the data in each node (no linear combination of data); and 6) they have \emph{optimal update}, namely, when an information element is updated, only $n-k+1$ elements in the array need update.

The remainder of the paper is organized as follows. Section \ref{sec2} provides definitions and background on MDS array codes. Section \ref{sec3} presents the new constructions of $(k+2,k)$ MDS array codes with an optimal rebuilding ratio. Section \ref{code-duplication} introduces the concept of code duplication that enables the constructions of $(k+2,k)$ MDS array codes for an arbitrary number of columns. We discuss the size of the finite field needed for these constructions in Section \ref{section 5}. Decoding algorithms for erasures and errors are discussed in Section \ref{sec:dec}. Section \ref{generalization} provides generalizations of our MDS code constructions to an arbitrary number of parity columns. Finally, we provide concluding remarks in Section \ref{summary}.

%
%
\section{Definitions and Problem Settings} \label{sec2}
%
%

In the rest of the paper, we are going to use $[i,j]$ to denote $\{i,i+1,\dots,j\}$ for integers $i \le j$. And denote the complement of a subset $X\subseteq M$ as $\overline{X}=M\backslash X$. For a matrix $A$, $A^T$ denotes the transpose of $A$.

Let $A=(a_{i,j})$ be an array of size $p\times k$ over a finite field $\mathbb{F}$, each entry of which is an information element. We add to the array two parity columns and obtain an $(n=k+2,k)$ MDS code of array size $p \times n$. Each element in these parity columns is a linear combination of elements from $A$. More specifically, let the two parity columns be $C_k=(r_0,r_1,...,r_{p-1})^T$ and $C_{k+1}=(z_0,z_1...,z_{p-1})^T$. Then for $l \in [0,p-1]$, $r_l=\sum_{a\in R_l}\alpha_aa\text{ and } z_l=\sum_{a\in Z_l}\beta_aa$, for some subsets $R_l,Z_l$ of elements in $A$, and some coefficients $\{\alpha_a\},\{\beta_a\}\subseteq \mathbb{F}$.
We will call $R=\{R_0,R_1,...,R_{p-1}\}$ and $Z=\{Z_0,Z_1,...,Z_{p-1}\}$ the sets that generate the parity columns.

Since the code is a $(k+2,k)$ MDS code, each information element should appear at least once in each parity column $C_{k},C_{k+1}.$ We will assume that each information element in $A$ appears exactly once in each parity column, which implies optimal update for the code. As a result, we have the following theorem.

\begin{thm}
The sets $R$ (or $Z$) are partitions of $A$ into $p$ equally sized sets of size $k$.
\end{thm}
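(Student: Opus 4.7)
The plan is to derive the claim from two ingredients: the hypothesis that every information element occurs exactly once in each parity column, and the MDS recovery property of the code.

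First I would observe that the ``appears exactly once'' hypothesis immediately yields the partition claim. Every $a \in A$ lies in precisely one set among $R_0,\dots,R_{p-1}$ (and likewise for $Z_0,\dots,Z_{p-1}$), so $R$ is a set partition of $A$ and $\sum_{l=0}^{p-1}|R_l|=pk$. It remains to show that every part has the same size $k$.

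Next I would invoke the MDS property in its most local form. Fix a systematic column $j\in[0,k-1]$ and consider the erasure pattern in which column $j$ together with the parity column $C_{k+1}$ is lost; since the code is a $(k+2,k)$ MDS code, this pair of erasures must be correctable, i.e.\ column $j$ must be recoverable from the $k-1$ surviving systematic columns together with $C_k$. Set $m_{l,j}:=|R_l\cap(\text{column }j)|$, so that $\sum_{l=0}^{p-1}m_{l,j}=p$ because the $p$ entries of column $j$ are partitioned across the $R_l$. After subtracting the contributions of the $k-1$ known systematic columns, the parity equation for $r_l$ reduces to a single linear equation in the $m_{l,j}$ unknown entries of column $j$ lying in $R_l$; in particular, if $m_{l,j}=0$ the equation carries no information whatsoever about column $j$.

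The decisive step is then a counting argument. Recovery of the $p$ unknown entries of column $j$ is accomplished with at most $p$ equations (one per row of $R$), and only those with $m_{l,j}\ge 1$ impose any constraint on the unknowns. Unique solvability therefore forces $m_{l,j}\ge 1$ for every $l$, which combined with $\sum_{l}m_{l,j}=p$ forces $m_{l,j}=1$ for every $l$. Summing over $j$ gives $|R_l|=\sum_{j=0}^{k-1}m_{l,j}=k$ for every $l$; the analogous argument with the pair $(\text{column }j, C_k)$ erased gives the same conclusion for $Z$.

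The only subtle point I anticipate is justifying rigorously that $m_{l,j}=0$ for some $l$ really obstructs recovery: because the other parity column is also erased in this scenario, no auxiliary equations can be brought to bear, so the effective equation count in the unknowns of column $j$ is literally $|\{l:m_{l,j}\ge 1\}|$, which must be at least $p$ for a unique solution. Once this is noted, the rest is bookkeeping.
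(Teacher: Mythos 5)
Your proof is correct, and it reaches the conclusion by a route dual to the paper's. Both consider the same erasure scenario (column $j$ together with $C_{k+1}$) and deduce a structural constraint on $R$ from correctability. The paper argues locally: if some $R_l$ contained two entries of column $j$, the single equation from $r_l$ would involve at least two unknowns and, because the $R_l$ are disjoint, no other equation could involve them; this gives $m_{l,j}\le 1$, hence $|R_l|\le k$, and a pigeonhole on $\sum_l|R_l|=pk$ finishes. You argue globally: the system decouples across the $R_l$, the rank of the coefficient matrix is at most $|\{l:m_{l,j}\ge 1\}|$, and solvability of a $p$-dimensional system forces that count to be at least $p$; this gives $m_{l,j}\ge 1$, and pigeonhole on $\sum_l m_{l,j}=p$ forces $m_{l,j}=1$, whence $|R_l|=\sum_j m_{l,j}=k$. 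The two arguments thus extract opposite one-sided bounds ($m_{l,j}\le 1$ versus $m_{l,j}\ge 1$) from the very same linear system and each converts its bound into equality by a counting identity. Your version establishes $m_{l,j}=1$ for every pair $(l,j)$ en route, which makes the final size computation immediate; the paper's version is a touch more concrete in that it exhibits the explicit decoding failure. One small phrasing nit: the rank is \emph{at most} (not ``literally'') $|\{l:m_{l,j}\ge 1\}|$, but that is the direction your argument needs, so nothing breaks.
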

\begin{IEEEproof}
Each set $X\in R$ does not contain two entries of $A$ from the same column. W.l.o.g. assume $X$ contains two entries of $A$ from the first column, then we can not rebuild these two elements when the first column and the parity column $C_{k+1}$ are erased. Thus $X$ contains at most one entry from each column and then $|X|\leq k.$ However each element of $A$ appears exactly once in each parity column, thus if there is $|X|<k$, $X$ $\in$ $R$, there is $Y\in R$, with $|Y|>k$, which leads to a contradiction. Therefore, $|X|=k$ for all $X \in R$. As each information element appear exactly once in the first parity column, $R=\{R_0,\dots,R_{p-1}\}$ are partitions of $A$ into $p$ equally sized sets of size $k$. Similar proof holds for the sets $Z=\{Z_0,\dots,Z_{p-1}\}$.
\end{IEEEproof}

By the above theorem, for the $j$-th systematic column $(a_0,\dots,a_{p-1})^T$, its $p$ elements are contained in $p$ distinct sets $R_l$, $l \in [0,p-1]$. In other words, the membership of the $j$-th column's elements in the sets $\{R_l\}$ defines a permutation $g_j:[0,p-1] \to [0,p-1]$, such that  $g_j(i)=l$ iff $a_i \in R_{l}$. Similarly, we can define a permutation $f_j$ corresponding to the second parity column, where $f_j(i)=l$ iff $a_i \in Z_{l}$.
For example, Figure \ref{fig:shapes} shows a $(5,3)$ code. Each element in the parity column Z is a linear combination of elements with the same symbol. And each systematic column corresponds to a permutation of the four symbols.

\begin{figure}
	\centering
\begin{tabular}{|c|c|c|c|c|c|}
	\hline
	& 0 & 1 & 2 & R & Z \\
	\hline
	0 & \cellcolor[gray]{0.8}{$\clubsuit$} & $\spadesuit$ & \cellcolor[gray]{0.8}{$\heartsuit$} & \cellcolor[gray]{0.8}{ } & \cellcolor[gray]{0.8}{$\clubsuit$} \\
	\hline
	1 & \cellcolor[gray]{0.8}{$\heartsuit$} & $\diamondsuit$ & \cellcolor[gray]{0.8}{$\clubsuit$} & \cellcolor[gray]{0.8}{ }& \cellcolor[gray]{0.8}{$\heartsuit$} \\
	\hline
	2 & $\spadesuit$ & $\clubsuit$ & $\diamondsuit$ & & $\spadesuit$ \\
	\hline
	3 & $\diamondsuit$ & $\heartsuit$ & $\spadesuit$ & & $\diamondsuit$ \\
	\hline
\end{tabular}
\caption{Permutations for zigzag sets in a $(5,3)$ code with $4$ rows. The permutations for rows are the identity permutations. The shaded elements are accessed to rebuild column 1.}
\label{fig:shapes}
\end{figure}

Observing that there is no importance of the elements' ordering in each column, w.l.o.g. we can assume that the first parity column contains the sum of each row of $A$ and $g_j$'s correspond to identity permutations, i.e. $r_i=\sum_{j=0}^{k-1} \alpha_{i,j}a_{i,j}$. We refer to the first and second parity columns as the row column and the zigzag column respectively, likewise $R_l$ and $Z_l$, $l \in [0,p-1]$, are referred to as row sets and zigzag sets respectively. Call $f_j$, $j \in [0,k-1]$, zigzag permutations. By assuming that the first parity column contains the row sums, the code is uniquely defined by (i) the zigzag permutations, and (ii) the coefficients in the linear combinations.

Our approach consists of two steps: first we choose the appropriate zigzag sets $Z_0,...Z_{p-1}$ in order to minimize the rebuilding ratio, and then we choose the coefficients in the linear combinations in order to make sure that the constructed code is indeed a $(k+2,k)$ MDS code.
But first we show that for any set of zigzag sets $Z=\{Z_0,...,Z_{p-1}\}$ there exists a $(k+2,k)$ MDS array code over a field $\mathbb{F}$ large enough. For that proof we use the well known Combinatorial Nullstellensatz by Alon \cite{Alon-polynomial-method}:
\begin{thm} (Combinatorial Nullstellensatz) \cite[Th 1.2]{Alon-polynomial-method}
\label{polynomial-method}
Let $\mathbb{F}$ be an arbitrary field, and let $f=f(x_1,...,x_q)$ be a polynomial in $\mathbb{F}[x_1,...,x_q]$. Suppose the degree of $f$ is $\deg(f)=\sum_{i=1}^q t_i$, where each $t_i$ is a nonnegative integer, and suppose the coefficient of $\prod_{i=1}^q x_i^{t_i}$ in $f$ is nonzero. Then, if $S_1,...,S_n$ are subsets of $\mathbb{F}$ with $|S_i| > t_i$, there are
$s_1\in S_1,s_2\in S_2,...,s_q\in S_q$ so that $$f(s_1,...,s_q)\neq 0.$$
\end{thm}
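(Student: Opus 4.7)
The plan is to prove the Combinatorial Nullstellensatz in two stages: first establish a vanishing lemma for multilinearly bounded polynomials, then use a polynomial reduction argument to reduce the general theorem to that lemma. The only machinery required is the one-variable fact that a univariate polynomial of degree at most $d$ with $d+1$ distinct roots is the zero polynomial.

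First I would prove the following auxiliary lemma: if $P(x_1,\ldots,x_q)\in\mathbb F[x_1,\ldots,x_q]$ satisfies $\deg_{x_i}(P)\le |S_i|-1$ for each $i$, and $P$ vanishes identically on $S_1\times\cdots\times S_q$, then $P$ is the zero polynomial. The proof goes by induction on $q$. The base case $q=1$ is the univariate fact mentioned above. For the inductive step, expand $P=\sum_{j=0}^{|S_q|-1}x_q^{j}P_j(x_1,\ldots,x_{q-1})$. Fixing any $(s_1,\ldots,s_{q-1})\in S_1\times\cdots\times S_{q-1}$, the univariate polynomial $P(s_1,\ldots,s_{q-1},x_q)$ has degree at most $|S_q|-1$ and vanishes on all of $S_q$, hence is zero, so every $P_j(s_1,\ldots,s_{q-1})=0$. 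The inductive hypothesis then forces each $P_j\equiv 0$.

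Next, for the theorem itself, I would argue by contradiction: assume $f$ vanishes on all of $S_1\times\cdots\times S_q$. For each $i$ define $g_i(x_i)=\prod_{s\in S_i}(x_i-s)=x_i^{|S_i|}-h_i(x_i)$ with $\deg h_i<|S_i|$. Inside $f$, repeatedly substitute $x_i^{|S_i|}\mapsto h_i(x_i)$ wherever it appears as a factor of a monomial, until no variable in the resulting polynomial $\tilde f$ has degree $\ge |S_i|$. Since $g_i(s_i)=0$ for every $s_i\in S_i$, the values of $f$ and $\tilde f$ agree on the grid $S_1\times\cdots\times S_q$, so $\tilde f$ also vanishes there. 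Because $\deg_{x_i}(\tilde f)\le |S_i|-1$, the lemma forces $\tilde f\equiv 0$.

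The main obstacle, and the heart of the argument, is to verify that this reduction does not destroy the distinguished monomial $\prod_{i=1}^{q}x_i^{t_i}$. The key observation is that each substitution step $x_i^{|S_i|}\mapsto h_i(x_i)$ only touches monomials in which $x_i$ appears with exponent at least $|S_i|$, and the new terms it produces still have $x_i$-degree strictly greater than every exponent $<|S_i|$ that appeared before the substitution, except that the $x_i$-degree strictly decreases in the replaced monomial. Since by hypothesis $t_i<|S_i|$ for every $i$, no substitution can ever create or cancel a contribution to the monomial $\prod x_i^{t_i}$; hence its coefficient in $\tilde f$ equals its (nonzero) coefficient in $f$. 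This contradicts $\tilde f\equiv 0$ and completes the proof.
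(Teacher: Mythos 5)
The paper does not give its own proof of this theorem: it is imported verbatim from Alon's paper, where it appears as Theorem~1.2. Your proof strategy (vanishing lemma for polynomials of low per-variable degree, plus multivariate division by the grid polynomials $g_i(x_i)=\prod_{s\in S_i}(x_i-s)$) is in fact the standard proof, essentially the one Alon gives.

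The vanishing lemma and its induction are stated and proved correctly, and the reduction setup is sound. However, the justification for the crucial claim---that the reduction preserves the coefficient of $\prod_i x_i^{t_i}$---is not correct as written. You assert that ``the new terms it produces still have $x_i$-degree strictly greater than every exponent $<|S_i|$ that appeared before the substitution,'' but this is false: replacing $x_i^{d_i}$ (with $d_i\ge|S_i|$) by $x_i^{d_i-|S_i|}h_i(x_i)$ yields monomials of $x_i$-degree anywhere in $[d_i-|S_i|,\,d_i-1]$, which can certainly land on the value $t_i<|S_i|$. Per-variable degree bookkeeping alone cannot rule out a newly created contribution to $\prod_i x_i^{t_i}$. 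The correct argument goes through \emph{total} degree: the distinguished monomial has total degree $\sum_i t_i=\deg(f)$, which is maximal; each substitution step leaves all $x_j$-degrees ($j\ne i$) of a touched monomial unchanged and strictly lowers its $x_i$-degree and hence its total degree; thus any monomial $M$ that could be reduced onto $\prod_i x_i^{t_i}$ would have $\deg(M)>\sum_i t_i$, contradicting the fact that $\deg$ never increases during the reduction and started at $\sum_i t_i$. Replacing your garbled sentence with this total-degree observation closes the gap and makes the proof complete.
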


\begin{thm}
\label{zigzag-sets}
Let $A=(a_{i,j})$ be an array of size $p\times k$ and the zigzag sets be $Z=\{Z_0,...,Z_{p-1}\}$, then  there exists a $(k+2,k)$ MDS array code for $A$ with $Z$ as its zigzag sets over the field $\mathbb{F}$ of size greater than $p(k-1)+1$.
\end{thm}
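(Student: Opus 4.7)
The plan is to encode the MDS property as the non-vanishing of a single multivariate polynomial $f$ in the free coefficients $\{\alpha_a,\beta_a\}_{a\in A}\subset\mathbb{F}$, to isolate a non-cancelling monomial of $f$, and then to invoke Theorem~\ref{polynomial-method}.

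First I reduce the MDS condition to a list of determinant constraints. Erasing the two parity columns is trivial; erasing one parity column together with one systematic column $j$ is resolved directly from the surviving parity (which contains each element of column $j$ exactly once), provided the corresponding coefficient is non-zero. The only delicate case is the erasure of two systematic columns $j_1<j_2$: the $p$ row-parity and $p$ zigzag-parity equations form a $2p\times 2p$ linear system $M_{j_1,j_2}$ on the unknowns $(a_{i,j_1},a_{i,j_2})_{i=0}^{p-1}$, where row-parity row $i$ contributes entries $\alpha_{i,j_1},\alpha_{i,j_2}$ in the corresponding $a$-columns and zigzag row $l$ contributes $\beta_{f_{j_1}^{-1}(l),j_1},\beta_{f_{j_2}^{-1}(l),j_2}$. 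Setting $D_{j_1,j_2}:=\det M_{j_1,j_2}$, the code is MDS iff
\[
f\;:=\;\prod_{a\in A}\alpha_a\beta_a\,\cdot\,\prod_{j_1<j_2}D_{j_1,j_2}\;\neq\;0.
\]

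The main obstacle is showing $f\not\equiv 0$ by exhibiting a non-cancelling monomial. The matrix-columns of $M_{j_1,j_2}$ split into a ``$j_1$-block'' whose entries involve only $\alpha_{\cdot,j_1},\beta_{\cdot,j_1}$ and a ``$j_2$-block'' involving only $\alpha_{\cdot,j_2},\beta_{\cdot,j_2}$. In the Leibniz expansion of $\det M_{j_1,j_2}$, the permutation matching each row-parity row $i$ with the column $a_{i,j_1}$ and each zigzag row $l$ with the column $a_{f_{j_2}^{-1}(l),j_2}$ produces the monomial $\pm\prod_i \alpha_{i,j_1}\beta_{i,j_2}$; once every $\alpha_{i,j_1}$ factor is demanded the row-parity rows are pinned onto the $j_1$-block, and then the permutation constraint forces the zigzag rows onto the $\beta_{\cdot,j_2}$-entries, so no other permutation produces this monomial. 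Hence each $D_{j_1,j_2}$ is non-zero as a polynomial, and multiplying the distinguished monomials across pairs together with the $\alpha_a\beta_a$ factors yields a monomial of $f$ with coefficient $\pm 1$.

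Finally I apply Theorem~\ref{polynomial-method}. Each variable $\alpha_{i,j}$ or $\beta_{i,j}$ appears in the $k-1$ determinant factors of $f$ that involve column $j$, with degree at most $1$ in each; combining this with the contribution from $\prod_a \alpha_a\beta_a$ and with the observation that every non-zero Leibniz term of each $D_{j_1,j_2}$ picks exactly $p$ entries per column-block gives exponent bounds $t_i$ for the distinguished monomial that admit $|S_i|>t_i$ as soon as $|\mathbb{F}|>p(k-1)+1$. Any such assignment of $\{\alpha_a,\beta_a\}$ in $\mathbb{F}$ then yields a $(k+2,k)$ MDS array code with the prescribed zigzag sets $Z$.
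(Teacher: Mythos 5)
Your overall strategy mirrors the paper's: encode the MDS condition as the non-vanishing of a product of $\binom{k}{2}$ per-pair $2p\times 2p$ determinants and invoke the Combinatorial Nullstellensatz. The difference is only in the parameterization: you make all $2pk$ coefficients free indeterminates, whereas the paper fixes the row coefficients to $1$ and ties all zigzag coefficients in column $j$ to a single variable $x_j$, giving only $k$ indeterminates. Your finer parameterization is legitimate and in fact gives a smaller per-variable exponent in the distinguished monomial (at most $k$ rather than $p(k-1)$), so the stated field-size hypothesis is comfortably met.

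However, the sentence ``multiplying the distinguished monomials across pairs together with the $\alpha_a\beta_a$ factors yields a monomial of $f$ with coefficient $\pm 1$'' is a genuine gap, and it is exactly the step the Nullstellensatz hinges on. Knowing that each factor $D_{j_1,j_2}$ individually contains the monomial $\prod_i\alpha_{i,j_1}\beta_{i,j_2}$ with coefficient $\pm 1$ shows $D_{j_1,j_2}\not\equiv 0$, but by itself it does not show that the product monomial $\prod_{i,j}\alpha_{i,j}^{\,k-j}\beta_{i,j}^{\,j+1}$ survives with a nonzero coefficient in $f=\prod_a\alpha_a\beta_a\cdot\prod_{j_1<j_2}D_{j_1,j_2}$: the factors share variables ($\alpha_{i,j}$ and $\beta_{i,j}$ each occur in all $k-1$ determinants involving column $j$), so the same product monomial could a priori arise from several distinct choices of monomials from the factors and cancel. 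To close the gap you must show the decomposition is unique; for example, assign weight $-j$ to every $\alpha_{i,j}$ and weight $j$ to every $\beta_{i,j}$. Then for $j_1<j_2$ your distinguished Leibniz term is the \emph{unique} maximum-weight term in $D_{j_1,j_2}$ (each row-parity row's max-weight pick is $\alpha_{i,j_1}$, each zigzag row's is $\beta_{\cdot,j_2}$, and the bijection constraint makes any deviation strictly lose weight), the factor $\prod_a\alpha_a\beta_a$ is weight-homogeneous, and the max-weight monomial of a product of polynomials is the product of the factors' unique max-weight monomials with coefficient the product of their coefficients, hence $\pm1$. (The paper relies on the analogous fact, implicitly, via the lex order on $x_0>\dots>x_{k-1}$.) With that lemma supplied, your proof is complete and even gives a slightly sharper field-size bound than the paper's.
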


\begin{IEEEproof}
Assume the information of $A$ is given in a column vector $W$ of length $pk$, where column $i$ of $A$ is in the row set $[(i-1)p+1,ip]$ of $W$. Each systematic node $i$, $i \in [0,k-1]$, can be represented as $Q_iW$ where $Q_i=[0_{p\times pi},I_{p\times p},0_{p\times p(k-i-1)}]$. Moreover define $Q_{k}=[I_{p\times p}, I_{p\times p},...,I_{p\times p}],Q_{k+1}=[x_0P_0,x_1P_1,...,x_{k-1}P_{k-1}]$ where the $P_i$'s are permutation matrices  of size $p\times p$, and the $x_i$'s are indeterminates, such that $C_{k}=Q_{k}W,C_{k+1}=Q_{k+1}W$. The permutation matrix  $P_i=(p^{(i)}_{l,m})$ is defined as $p^{(i)}_{l,m}=1$ if and only if $a_{m,i}\in Z_l$ and the $P_i$'s are not necessarily distinct. If there exists such
MDS code it is equivalent to the existence of a set of values for $\{x_i\}$ such that for any set of integers $\{s_1,s_2,...,s_k\}\subseteq [0,k+1]$ the matrix $Q=[Q_{s_1}^T,Q_{s_1}^T,...,Q_{s_k}^T ]$ is of full rank. It is easy to see that if the parity column $C_{k+1}$ is erased i.e., $k+1 \notin \{s_1,s_2,...,s_k\}$ then $Q$ is of full rank. If $k\notin \{s_1,s_2,...,s_k\}\text{ and } k+1\in \{s_1,s_2,...,s_q\}$ then $Q$ is of full rank if none of the $x_i$'s equals to zero. The last case is when both $k,k+1\in \{s_1,s_2,...,s_k\}$, i.e.,  there are $0\leq i<j\leq k-1$ such that $i,j\notin \{s_1,s_2,...,s_k\}.$ It is easy to see that in that case $Q$ is of full rank if and only if the submatrix
\[B_{i,j}= \left( \begin{array}{cc}
x_iP_i &x_jP_j  \\
I_{p\times p} & I_{p\times p}  \end{array} \right)\]
is of full rank. This is equivalent to $ \det(B_{i,j})\neq 0$. Note that $\deg( \det(B_{i,j}))=p$ and the coefficient of $x_i^p$ is $\det(P_i) \in \{1,-1\}$.
Define the polynomial $$T=T(x_0,x_1,...,x_{k-1})=\prod_{0\leq i<j\leq k-1}\det(B_{i,j}),$$ then the result follows if and only if there is an assignment $a_0,a_1,..,a_{k-1}\in \mathbb{F}$ such that $T(a_0,a_1,...,a_{k-1})\neq 0.$ $T$ is of degree $p\binom{k}{2}$ and the coefficient of
$\prod_{i=0}^{k-1} x_i^{p(k-i)}$ is $\prod_{i=0}^{k-1} \det(P_i)^{k-i}\neq 0$. Set for any $i, S_i=\mathbb{F} \backslash 0$ in Theorem \ref{polynomial-method}, then the result follows.
\end{IEEEproof}

The above theorem states that there exist coefficients such that the code is MDS, and thus we will focus on finding proper permutations $\{f_j\}$ first. The idea behind choosing the zigzag sets is as follows: assume a systematic column $(a_0,a_1,...,a_{p-1})^T$ is erased. Each element $a_i$  is contained in exactly one row set and one zigzag set. For rebuilding of element $a_i$, access the parity of its row set or zigzag set. Moreover access the values of the remaining elements in that set, except $a_i$. We say that an element $a_i$ is rebuilt by a row (zigzag) if the parity of its row set (zigzag set) is accessed. For example, in Figure \ref{fig:shapes} supposing column $1$ is erased, one can access the shaded elements and rebuild its first two elements by rows, and the rest by zigzags.

In order to minimize the number of accesses to rebuild the node, we need to minimize  the size of \begin{equation}
\label{eq:77}
|\cup_{i=0}^{k-1} S_i|
\end{equation}
where each $S_i \in R\cup Z$ is either a row set or a zigzag set containing $a_i$. Each $|S_i|=k$, therefore in order to minimize the size of the union we will try to maximize the number of intersections between the sets $\{S_i\}_{i=0}^{k-1}.$ We say that $S=(S_0,S_1,...,S_{k-1})$ rebuilds $(a_1,a_2,...,a_p)^T$. For the rebuilding of node $i$ by $S=(S_0,S_1,...,S_{k-1})$, define the number of intersections by $I(i|S)=pk-|\cup_{j=0}^{k-1}S_j|$. Moreover define the number of total intersections in an MDS array code with zigzag sets $Z$ as $$I(Z)=\sum_{i=0}^{k-1}\max_{S\text{ rebuilds } i}I(i|S).$$ Now define $h(k)$ to be the maximum possible intersections over all $(k+2,k)$ MDS array codes, i.e., $$h(k)=\max_{Z}I(Z)$$

In Figure \ref{fig:shapes} the rebuilding set is $S=\{R_0,R_1,Z_0,Z_1\}$, the size of (\ref{eq:77}) is $8$ and $I(1|S)=4$. Note that each surviving node accesses exactly $\frac{1}{2}$ of its information without performing any calculation within it. The following lemma gives a recursive bound for the number of intersections.

\begin{thm}
\label{th:inequality}
Let $q\leq k\leq p$ then $h(k)\leq \frac{k(k-1)h(q)}{q(q-1)}$
\end{thm}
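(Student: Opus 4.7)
The plan is to prove the inequality by a double counting argument over all $q$-element subsets $T$ of the $k$ systematic columns. The key observation is that any $(k+2,k)$ code, when restricted to the $q$ columns indexed by $T$ (together with the two parity columns), inherits a valid $(q+2,q)$ code structure: the row and zigzag sets intersected with the sub-array form partitions of it into sets of size $q$, each containing exactly one element from each surviving column. Consequently, restricting a rebuilding sequence of a column $i \in T$ in the large code yields a legal rebuilding sequence in the sub-code.

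First, I would fix an optimal code $Z$ achieving $I(Z) = h(k)$ and, for each column $i \in [0,k-1]$, fix an optimal rebuilding sequence $S^{(i)}$ realizing $\max_{S} I(i\mid S)$. For every $T$ of size $q$ containing $i$, let $S^{(i)}|_T$ denote the restriction obtained by intersecting every set in $S^{(i)}$ with the elements of the columns in $T$; since each row set and each zigzag set contains exactly one element per column, this restriction is a legal rebuilding sequence of column $i$ in the induced sub-code, and its intersection count $I_T(i\mid S^{(i)}|_T)$ is at most $\max_{S'} I_T(i\mid S')$.

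Next, I would evaluate the double sum $\Sigma = \sum_{|T|=q}\sum_{i\in T} I_T(i\mid S^{(i)}|_T)$ in two ways. For the upper bound, the inner sum for each fixed $T$ is at most $I(Z|_T) \le h(q)$, giving $\Sigma \le \binom{k}{q}h(q)$. For the lower bound, I would use the identity $I(i\mid S^{(i)}) = \sum_{e} \max(N^{(i)}_e - 1, 0)$, where $N^{(i)}_e$ counts how many sets in $S^{(i)}$ contain the element $e$, and note that restriction to $T$ merely discards all $e$ outside the sub-array on $T$. A short verification shows that $N^{(i)}_e = 1$ whenever $e$ lies in the erased column $i$, because the sets in $S^{(i)}$ contain pairwise distinct elements of column $i$; thus only elements in columns $c \ne i$ contribute. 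Such an element contributes to $I_T(i\mid S^{(i)}|_T)$ precisely when both $i$ and $c$ lie in $T$, which happens for $\binom{k-2}{q-2}$ choices of $T$, so $\Sigma = \binom{k-2}{q-2}\sum_i I(i\mid S^{(i)}) = \binom{k-2}{q-2}\,h(k)$.

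Combining the two estimates yields $\binom{k-2}{q-2} h(k) \le \binom{k}{q} h(q)$, and the ratio $\binom{k}{q}/\binom{k-2}{q-2} = \frac{k(k-1)}{q(q-1)}$ delivers the claimed inequality. The main technical point, and the subtle one to verify carefully, is the vanishing of contributions from elements in the erased column: without it, the counting would produce a mix of $\binom{k-1}{q-1}$ and $\binom{k-2}{q-2}$ terms and the clean factor $k(k-1)/q(q-1)$ would not emerge.
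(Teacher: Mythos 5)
Your proposal is correct and takes essentially the same approach as the paper: a double-count over all $q$-subsets of columns, bounding one side by $\binom{k}{q}h(q)$ via the induced sub-codes and the other by $\binom{k-2}{q-2}h(k)$ via restriction of the optimal rebuilding sets. The main difference is cosmetic: you phrase the intersection count as $\sum_e \max(N_e - 1, 0)$ and explicitly prove that elements in the erased column contribute nothing (since each rebuilding set contains a distinct element of that column), whereas the paper encodes the same fact implicitly by defining column-wise intersection counts $I(l,i\mid S)$ and summing only over $l \neq i$; this makes your $\Sigma$ an exact equality rather than the paper's one-sided inequality, which is a slightly cleaner bookkeeping but not a different argument.
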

\begin{IEEEproof}
Let $A$ be an information array of size $p \times k$. Construct an MDS array code $\cC$ such that the first parity is the row sums, and the second parity is defined by the zigzag sets $Z$. Suppose the zigzag sets $Z$ are defined such that $I(Z)$ is maximized.

Let $B=(b_{i,j})$ be a matrix of size $\binom{k}{q}\times k$ which is defined as follows: each row $j$ in the matrix corresponds to a subset of columns of size $q$ of $A$, denoted by $J$. Denote by the subarray of $A$ with columns in $J$ as $A_J$. Define zigzag sets $Z_J$ as subsets of $Z$ with only elements in columns $J$. For each $A_J$, construct an MDS array code $\cC_J$ using $A_J$ as information, row sum as the first parity, and $Z_J$ as the second parity.

Each column $i$ in $B$ corresponds to column $i$ in $A$. Let $b_{j,i}$ be the number of maximal intersections when rebuilding column $i$ in the code $\cC_J$.
If $i\notin J$ then $b_{j,i}=0.$ It is easy to see that the sum of each row is no more than $h(q)$.

On the other hand, the sum of the columns can be computed as follows. for the code $\cC$, denote by $I(l,i|S)$ the number of intersections in column $l$ for rebuilding column $i$ using the rebuilding set $S$.
Suppose $S_i^*$ rebuilds column $i$ and maximizes $I(i|S)$, then
$$\sum_{i}\sum_{l \neq i}I(l,i|S_i^*)=h(k).$$
Similarly, for the code $\cC_J$, $l \neq i \in J$, denote by $I_{J}(l,i|S)$ the number of intersections in column $l$ for rebuilding column $i$ using a rebuilding set $S$.
Assume $S_{i,J}^*$ rebuilds column $i$ in $\cC_J$ and maximizes $\sum_{l \neq i \in J}I_J(l,i|S)$, then by definition
$$b_{j,i}=\sum_{l \neq i \in J}I_J(l,i|S_{i,J}^*).$$
Note that $S_i^*$ also rebuilds column $i$ in $\cC_J$, but may not be the maximum-achieving set:
$$\sum_{l \neq i \in J}I_J(l,i|S_{i,J}^*) \ge \sum_{l \neq i \in J}I(l,i|S_i^*).$$
Combining  the above equations, we get
\begin{eqnarray*}
\sum_{i}\sum_{j}b_{j,i}&=&\sum_{i}\sum_{J}\sum_{l \neq i \in J}I_J(l,i|S_{i,J}^*) \\
& \ge & \sum_{i}\sum_{J}\sum_{l\neq i \in J} I(l,i|S_i^*) \\
&=& \sum_{i}\sum_{l \neq i} \sum_{J:l,i \in J} I(l,i|S_i^*) \\
&=& \sum_{i}\sum_{l \neq i} \binom{k-2}{q-2} I(l,i|S_i^*) \\
&=& \binom{k-2}{q-2} \sum_{i}\sum_{l \neq i} I(l,i|S_i^*) \\
&=&  \binom{k-2}{q-2} h(k)
\end{eqnarray*}
Thus the summation of the entire matrix $B$ leads us to $$h(k)\leq \frac{\binom{k}{q}h(q)}{\binom{k-2}{q-2}}=\frac{k(k-1)h(q)}{q(q-1)}$$
and completes the proof.
\end{IEEEproof}

For a fixed number of rows $p$ define the \emph{rebuilding ratio} for a $(k+2,k)$ MDS code $\cC$ as
$$R(\mathcal{C}) = \frac{\sum_{i=0}^{k-1} \textrm{accesses to rebuild node } i}{pk(k+1)},$$
which denotes the average fraction of accesses in the surviving array for rebuilding one systematic node.
Define the \emph{ratio function} for all $(k+2,k)$ MDS codes with $p$ rows as $$R(k)=\min_{\cC}{R(\cC)}=1-\frac{h(q)+pq}{pq(q+1)}.$$
which is the optimal average portion of the array needed to be accessed in order to rebuild one lost column. In this expression $h(q)$ and $pq$ in the numerator correspond to the number of elements we do not access in the systematic nodes and parity nodes, respectively.

\begin{lem}
\label{monotone function}
$R(k)$ is no less than $\frac{1}{2}$ and is a monotone nondecreasing function.
\end{lem}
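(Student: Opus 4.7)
My plan is to prove both assertions from a single inequality, namely $h(k)\le pk(k-1)/2$, which I will establish directly from the definition of $h(k)$ and then combine with Theorem~\ref{th:inequality} to obtain monotonicity.

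First I would bound the per-column intersection. Fix an erased column $i$ and a rebuilding set $S=(S_0,\ldots,S_{p-1})$ for it, and partition the row indices as $[0,p-1]=X\sqcup Y$ where $X=\{l:S_l=R_l\}$ is the set of rows rebuilt by a row set and $Y$ is the set of rows rebuilt by a zigzag set. For any surviving systematic column $j\neq i$, the row set $R_l$ touches the element in row $l$ of column $j$, while the zigzag set $Z_{f_i(l)}$ touches the element in row $f_j^{-1}(f_i(l))$ of column $j$. Writing $\sigma_j=f_j^{-1}\circ f_i$, the rows of column $j$ that $S$ touches are exactly $X\cup\sigma_j(Y)$, so
\[
I(j,i\mid S)=p-|X\cup\sigma_j(Y)|=|X\cap\sigma_j(Y)|.
\]
Since $\sigma_j$ is a bijection, $|\sigma_j(Y)|=|Y|=p-|X|$, and hence $|X\cap\sigma_j(Y)|\le\min(|X|,|Y|)\le p/2$. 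Summing over the $k-1$ surviving systematic columns yields $I(i\mid S)\le (k-1)p/2$, and then summing over $i$ yields $h(k)\le pk(k-1)/2$. Substituting this into $R(k)=1-(h(k)+pk)/(pk(k+1))$ immediately gives $R(k)\ge 1/2$.

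For monotonicity, a direct algebraic rearrangement shows that $R(k+1)\ge R(k)$ is equivalent to $kh(k+1)-(k+2)h(k)\le pk$. Specializing Theorem~\ref{th:inequality} to $q=k$ gives $h(k+1)\le (k+1)h(k)/(k-1)$, so combining the two estimates,
\[
kh(k+1)-(k+2)h(k)\le \frac{k(k+1)h(k)}{k-1}-(k+2)h(k)=\frac{2h(k)}{k-1},
\]
and it suffices to verify $2h(k)/(k-1)\le pk$, i.e.\ $h(k)\le pk(k-1)/2$, which is exactly the bound from the first step.

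The only nontrivial content is the opening step, whose key insight is that the set of rows of any surviving column touched by a rebuilding set decomposes as $X\cup\sigma_j(Y)$ with $|X|+|Y|=p$, forcing the $p/2$ ceiling on each per-column intersection. The rest is bookkeeping, and the pleasing fact is that the recursion of Theorem~\ref{th:inequality} meshes with the $pk(k-1)/2$ bound to produce exactly the required $pk$ on the right-hand side with no slack, so the argument barely goes through and cannot easily be tightened without strengthening either the intersection bound or Theorem~\ref{th:inequality}.
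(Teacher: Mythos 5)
Your proof is correct and follows essentially the same route as the paper: both establish the key bound $h(k)\le pk(k-1)/2$ (the paper by noting at least $p/2$ disjoint parity sets must be accessed, you by the cleaner per-column identity $I(j,i\mid S)=|X\cap\sigma_j(Y)|\le\min(|X|,|Y|)$) and both combine it with Theorem~\ref{th:inequality} at $q=k$ to get monotonicity. Your algebraic rearrangement to $kh(k+1)-(k+2)h(k)\le pk$ is equivalent to the paper's reduction, just organized in the opposite order.
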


\begin{IEEEproof}
From Theorem \ref{th:inequality} for $k=q+1$ we get $h(q+1)\leq \frac{(q+1)qh(q)}{q(q-1)}=\frac{(q+1)h(q)}{q-1}$. Then we get
\begin{eqnarray*}
R(q+1)&=&1-\frac{h(q+1)}{p(q+1)(q+2)}-\frac{1}{q+2} \\
&\geq & 1-\frac{h(q)}{p(q-1)(q+2)}-\frac{1}{q+2}
\end{eqnarray*}
We want to show
$$R(q+1) \ge R(q)$$
or
$$\frac{h(q)+p(q-1)}{p(q-1)(q+2)} \le \frac{h(q)+pq}{pq(q+1)}$$
which is equivalent to
\begin{equation} \label{half}
h(q) \le \frac{pq(q-1)}{2}.
\end{equation}
In the rebuilding of a node $(a_1,a_2,...,a_p)^T$, for any element $a_i$, one element from the parity nodes is accessed. In total we access in the two parity nodes $p$ elements out of $2p$ elements, i.e., exactly $\frac{1}{2}$ of the information. Let $x\text{ and } p-x$ be the number of elements that are accessed from the first and second parity respectively. W.l.o.g we can assume that $x\geq \frac{p}{2}$, otherwise $p-x$ would satisfy it. Each element of these $x$ sets is a sum of a set of size $q$. Thus in order to rebuild the node, we need to access at least $x(q-1)\geq \frac{p(q-1)}{2}$ elements in the $q-1$ surviving systematic nodes, which is at least half of the size of the surviving systematic nodes. So the number of intersections is no more than $\frac{pq(q-1)}{2}$.
Thus (\ref{half}) holds and the ratio function $R(k) \ge \frac{1}{2}$.
\end{IEEEproof}

The lower bound of $1/2$ in the above theorem can be also derived from \eqref{eq:tradeoff}.
We will see from Lemma \ref{lem13} that $R(k)$ is almost $1/2$ for all $k$ and  $p=2^l$, where $l$ is large enough.

By Lemma \ref{monotone function} and \eqref{eq:tradeoff} for any $p$ and $k$, $R(k)\geq \frac{1}{2}.$ For example, it can be verified that for the code in  Figure \ref{fig:shapes}, all the three systematic columns can be rebuilt by accessing half the remaining elements. Thus the code achieves the lower bound $1/2$, and therefore $R(3)=1/2$.


%
%
\section{$(k+2,k)$ MDS array code constructions} \label{sec3}
%
%

The previous section gave us a lower bound for the ratio function. The question is can we achieve it? If so, how? We know that each $(k+2,k)$ MDS array code with row and zigzag columns is defined by a set of permutations $f_0,...,f_{k-1}$ and their subsets $X_i$'s. The following construction constructs a family of such MDS array codes.
From any set $T\subseteq \mathbb{F}_2^m$, $|T|=k$, we construct a $(k+2,k)$ MDS array code of size $2^m\times (k+2).$
The ratio of the constructed code will be proportional to the size of the union of the rebuilding set.
Thus we will try to construct such permutations and subsets that minimize the union. We will show that some of these codes have the optimal ratio of $\frac{1}{2}$.

In this section all the calculations are done over $\mathbb{F}_2$. By abuse of notation we use $x\in [0,2^{m}-1]$ both to represent the integer and its binary representation. It will be clear from the context which meaning is in use.

\begin{cnstr}
Let $A=(a_{i,j})$ be an array of size $2^m\times k$ for some integers $k,m$ and $k\leq 2^m$. Let $T\subseteq \mathbb{F}^m_2$ be a subset of vectors of size $k$. For $v\in T$ we define the permutation $f_v:[0,2^m-1]\to [0,2^m-1] $ by $f_v(x)=x+v$, where $x$ is represented in its binary representation. One can check that this is actually a permutation. For example when $m=2,v=(1,0)$ $$f_{(1,0)}(3)=(1,1)+(1,0)=(0,1)=1,$$
and the corresponding permutation of $v$ is $[2, 3, 0, 1]$.
In addition, we define $X_v$ as the set of integers $x$ in $[0,2^m-1]$ such that the  inner product between their binary representation and $v$ satisfies $x\cdot v=0$, e.g., $X_{(1,0)}=\{0,1\}$. The construction of the two parity columns is as follows:
The first parity column is simply the row sums. The zigzag sets $Z_0,...,Z_{2^m-1} $are defined by the permutations $\{f_{v_j}:v_j\in T\}$ as $a_{i,j}\in Z_l$ if $f_{v_j}(i)=l.$ We will denote the permutation $f_{v_j}$ as $f_j$. Assume column $j$ needs to be rebuilt, and denote $S_r=\{a_{i,j}:i\in X_j\}$ and $S_z=\{a_{i,j}:i\notin X_j\}.$ Rebuild the elements in $S_r$ by rows and the elements in $S_z$ by zigzags.
\label{cnstr1}
\end{cnstr}

Recall that by Theorem \ref{zigzag-sets} this code can be an MDS code over a field large enough.
The following theorem gives the ratio for Construction \ref{cnstr1}.
\begin{thm}
\label{th:123}
The code described in Construction \ref{cnstr1} and generated by the vectors $v_0,v_1,...,v_{k-1}$ is a $(k+2,k)$  MDS array code with ratio
\begin{equation}
R=\frac{1}{2}+\frac{\sum_{i=0}^{k-1}\sum_{j\neq i}|f_i(X_i)\cap f_j(X_i)|}{2^mk(k+1)}.\label{eq:345}
\end{equation}
\end{thm}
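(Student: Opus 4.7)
The plan is to count directly how many entries the prescribed rebuilding rule reads, and then to repackage the count in terms of the quantities $|f_i(X_i) \cap f_j(X_i)|$ appearing in the claim.

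Fix the erased systematic column $i$. Since each $v_i \in \mathbb{F}_2^m$ is nonzero, $X_i$ is a codimension-one linear subspace of $\mathbb{F}_2^m$, so $|X_i| = |\overline{X_i}| = 2^{m-1}$; the rule therefore accesses exactly $2^m$ parity entries ($2^{m-1}$ row parities indexed by $X_i$, plus $2^{m-1}$ zigzag parities indexed by $f_i(\overline{X_i})$). For each surviving systematic column $j \neq i$, the row-rebuilding step asks for the rows $A_j^r = X_i$, and since $a_{r,j} \in Z_{f_j(r)} = Z_{r+v_j}$, the zigzag-rebuilding step asks for the rows $A_j^z = f_j^{-1}(f_i(\overline{X_i})) = \overline{X_i} + v_i + v_j$. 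Both sets are cosets of the hyperplane $X_i$ in $\mathbb{F}_2^m$, hence they are either equal or disjoint, and consequently $|A_j^r \cup A_j^z| = 2^m - |A_j^r \cap A_j^z|$. A short $\mathbb{F}_2$ calculation shows that $A_j^r = A_j^z$ iff $v_i \cdot (v_i + v_j) = 1$.

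The pivotal observation is that $f_i(X_i) = X_i + v_i$ and $f_j(X_i) = X_i + v_j$ are likewise cosets of the same hyperplane $X_i$, and they coincide iff $v_i + v_j \in X_i$, i.e., iff $v_i \cdot (v_i + v_j) = 0$, which is precisely the complementary condition. Therefore, for every pair $i \neq j$,
$$|A_j^r \cap A_j^z| + |f_i(X_i) \cap f_j(X_i)| = 2^{m-1}.$$
I expect this coset complementarity to be the only real content of the proof; everything else is bookkeeping.

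To finish, the number of accesses to rebuild column $i$ equals $2^m + \sum_{j \neq i}(2^m - |A_j^r \cap A_j^z|) = k\cdot 2^m - \sum_{j\neq i} |A_j^r \cap A_j^z|$. Summing over $i$ and substituting the identity above gives
$$\sum_i (\text{accesses for column } i) = 2^{m-1} k(k+1) + \sum_i\sum_{j\neq i} |f_i(X_i) \cap f_j(X_i)|.$$
Dividing by the normalization $2^m k(k+1)$ from the definition of $R(\mathcal{C})$ yields the stated formula. The MDS property itself is not at stake here: it is guaranteed by Theorem \ref{zigzag-sets} over a sufficiently large field.
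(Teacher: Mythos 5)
Your proof is correct and takes essentially the same route as the paper: count the accesses to rebuild column $i$, express the per-column contribution in terms of $|f_i(X_i)\cap f_j(X_i)|$, sum over $i$, and divide by $2^m k(k+1)$. The only difference is cosmetic --- the paper obtains $|f_j^{-1}(f_i(\overline{X_i}))\setminus X_i| = |f_i(X_i)\cap f_j(X_i)|$ by a general set-theoretic chain (complementation plus bijectivity, using $|X_i|=2^{m-1}$), whereas you reach the same via the $\mathbb{F}_2$ coset geometry, observing that both $A_j^r,A_j^z$ and $f_i(X_i),f_j(X_i)$ are cosets of the hyperplane $X_i$ which coincide under complementary inner-product conditions.
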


\begin{IEEEproof}
In rebuilding of node $i$ we rebuild the elements in rows $X_i$ by rows, thus the first parity column $C_{k}$ accesses the values of the sum of rows $X_i$. Moreover, each surviving systematic node accesses its elements in rows $X_i$. Hence, by now $|X_i|k=2^{m-1}k$ elements are accessed, and we manage to rebuild the elements of node $i$ in rows $X_i$. The elements of node $i$ in rows $\overline{X_i}$ are rebuilt by zigzag, thus the second parity column $C_{k+1}$ accesses the values of zigzags $\{z_{f_i(l)}:l\in \overline{X_i}\}$ and each surviving systematic node accesses the elements of these zigzags from its column, unless these elements are already included in the rebuilding by rows. The zigzag elements in $\{Z_{f_i(l)}:l\in \overline{X_i}\}$ of node $j$ are in rows $f_j^{-1}(f_i(\overline{X_i}))$, thus the extra elements node $j$ need to access are  in rows $f_j^{-1}(f_i(\overline{X_i}))\backslash X_i.$ But,
\begin{align*}
|f_j^{-1}(f_i(\overline{X_i}))\backslash X_i|&=|\overline{f_j^{-1}(f_i(X_i))}\cap \overline{X_i}|\\
&=|f_j^{-1}(f_i(X_i))\cap X_i|\\
&=|(f_i(X_i))\cap f_j(X_i)|,
\end{align*}
where we used the fact that $f_i,f_j$ are bijections, and $|X_i|=|\overline{X_i}|=2^{m-1}$.
Hence in rebuilding of node $i$ the number of elements to be accessed is $2^{m-1}(k+1)+\sum_{j\neq i}|(f_i(X_i))\cap f_j(X_i)|.$ The result follows by dividing by the size of the remaining array $2^m(k+1)$ and averaging over  all systematic nodes.
\end{IEEEproof}

The following lemma will help us to calculate the sum in \eqref{eq:345}, but first we associate to any vector $v=(v_1,...,v_m)\in \mathbb{F}_2^m$ the subset of integers $B_v\subseteq [m]$ where $i\in B_v$ if $v_i=1.$

\begin{lem}
\label{lemma 3}
for any $v,u\in T$
\begin{equation}
|f_v(X_v)\cap f_u(X_v)|=
\begin{cases}
|X_v|, & |B_v \backslash B_u|\mod 2 =0\\
0,     & |B_v \backslash B_u|\mod 2 =1.
\end{cases}
\end{equation}
\end{lem}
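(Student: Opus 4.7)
The plan is to reduce the identity to a statement about affine hyperplanes in $\mathbb{F}_2^m$. The set $X_v$ is by definition the kernel hyperplane $\{y \in \mathbb{F}_2^m : y \cdot v = 0\}$, and each $f_w$ is an additive translation $y \mapsto y+w$. Consequently $f_w(X_v)$ is a coset of that hyperplane, namely $\{y : y\cdot v = w\cdot v\}$. So both $f_v(X_v)$ and $f_u(X_v)$ are cosets of the \emph{same} hyperplane, and any two such cosets in $\mathbb{F}_2^m$ are either equal (size $2^{m-1} = |X_v|$) or disjoint. This dichotomy is exactly the dichotomy claimed in the lemma, so everything reduces to deciding which case occurs.

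Next I would compute the two relevant inner products. For $f_v(X_v)$ the defining constant is $v \cdot v$, and for $f_u(X_v)$ it is $u \cdot v$. Translating between the $\mathbb{F}_2$ vector $v$ and its support $B_v \subseteq [m]$, we have $v \cdot v = |B_v| \bmod 2$ and $u \cdot v = |B_u \cap B_v| \bmod 2$. Therefore the two cosets coincide precisely when
\[
|B_v| \equiv |B_u \cap B_v| \pmod 2,
\]
which, using $|B_v| - |B_u \cap B_v| = |B_v \setminus B_u|$, is equivalent to $|B_v \setminus B_u|$ being even.

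Putting the two steps together gives the two cases of the lemma: when $|B_v \setminus B_u|$ is even, $f_v(X_v) = f_u(X_v)$ so the intersection has size $|X_v|$; when $|B_v \setminus B_u|$ is odd, the two cosets are distinct parallel hyperplanes and the intersection is empty. I do not anticipate a substantive obstacle here — the only thing to be careful about is bookkeeping the correspondence $v \leftrightarrow B_v$ and the identity $v \cdot v = |B_v| \bmod 2$ (which is special to characteristic $2$); once that is in hand, the whole argument is just the observation that translations send hyperplanes to parallel hyperplanes.
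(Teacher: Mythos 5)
Your argument is correct and is essentially the paper's own proof: both identify $X_v$ as a subgroup (hyperplane) of $\mathbb{F}_2^m$, note that $f_v(X_v)$ and $f_u(X_v)$ are cosets and hence equal or disjoint, and reduce the equality condition to the parity of $|B_v \setminus B_u|$. The only cosmetic difference is that the paper checks $v-u \in X_v$ directly while you compare the two coset constants $v\cdot v$ and $u\cdot v$; these are the same computation.
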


\begin{IEEEproof}
Consider the group $(\mathbb{F}_2^m,+)$. Recall that $f_v(X)=X+v=\{x+v:x\in X\}$.
The sets $f_v(X_v)=X_v+v$ and $f_u(X_v)=X_v+u$ are cosets of the subgroup $X_v=\{w\in \mathbb{F}_2^m:w\cdot v=0\}$, and they are either identical or disjoint. Moreover, they are identical iff $v-u\in X_v$, namely $(v-u)\cdot v= \sum_{i:v_i=1,u_i=0}1=0.$ However, $|B_v \backslash B_u|\mod 2 = \sum_{i:v_i=1,u_i=0}1$, and the result follows.
\end{IEEEproof}

This construction enables us to construct an MDS array code from any subset of vectors in $\mathbb{F}_2^m$. However, it is not clear which subset of vectors should be chosen. The following is an example of a code construction for a specific set of vectors.
\begin{xmpl} \label{xmpl1}
Let $T=\{v \in \mathbb{F}_2^m:\|v\|_1=3\}$ be the set of vectors with weight 3 and length $m$. Notice that $|T|=\binom{m}{3}$. Construct the code $\cC$ by $T$ according to Construction \ref{cnstr1}.
Given $v \in T$, $|\{u \in T: |B_v\backslash B_u|=3\}|= \binom{m-3}{3}$, which is the number of vectors with 1's in different positions as $v$. Similarly,
$|\{u \in T: |B_v\backslash B_u|=2\}|= 3\binom{m-3}{2}$ and
$|\{u \in T: |B_v\backslash B_u|=1\}|= 3(m-3)$. By Theorem \ref{th:123} and Lemma \ref{lemma 3}, for large $m$ the ratio is
$$\frac{1}{2}+\frac{2^{m-1}\binom{m}{3}3\binom{m-3}{2}}{2^m\binom{m}{3} (\binom{m}{3}+1)} \approx \frac{1}{2} + \frac{9}{2m}.$$
\end{xmpl}

Note that this code reaches the lower bound of the ratio as $m$ tends to infinity. In the following we will construct codes that reach the lower bound exactly.

Let $\{f_0,...,f_{k-1}\}$ be a set of permutations over the set $[0,2^{m}-1]$ with associated subsets $X_0,...,X_{k-1}\subseteq [0,2^{m}-1]$, where each $|X_i|=2^{m-1}$. We say that this set is a set of \emph{orthogonal permutations} if for any $i,j\in [0,k-1]$ $$\frac{|f_i(X_i)\cap f_j(X_i)|}{2^{m-1}}=\delta_{i,j},$$ where $\delta_{i,j}$ is the Kronecker delta. Let $\{e_i\}_{i=1}^m$ be the standard vector basis of $\mathbb{F}_2^m$ and set $e_0$ to be the zero vector. The following theorem constructs a set of orthogonal permutations of size $m+1$.
\begin{thm}
\label{orthogonal-permutations}
The permutations $f_0,...,f_m$ and sets $X_0,...,X_m$ constructed by the vectors $\{e_i\}_{i=0}^m$ and Construction \ref{cnstr1} where $X_0$ is modified to be $X_0=\{x\in F_2^m:x\cdot (1,1,...,1)=0\}$ is a set of orthogonal permutations. Moreover the $(m+3,m+1)$ MDS array code of array size $2^m\times (m+3)$ defined by these permutations has \emph{optimal} ratio of $\frac{1}{2}.$ Hence, $R(m+1)=\frac{1}{2}$.
\end{thm}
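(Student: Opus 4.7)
The plan is to verify the orthogonality condition $|f_i(X_i)\cap f_j(X_i)| = 2^{m-1}\delta_{i,j}$ directly by a short case analysis on the pair $(i,j)$, and then invoke Theorem~\ref{th:123} to conclude that the ratio attains the lower bound $\tfrac{1}{2}$ established in Lemma~\ref{monotone function}. The main subtlety is that Lemma~\ref{lemma 3} as stated is tailored to the subsets $X_v=\{x:x\cdot v=0\}$ produced by Construction~\ref{cnstr1}, whereas here $X_0$ has been replaced by the even-weight coset $\{x:x\cdot(1,\ldots,1)=0\}$; so the $i=0$ (or $j=0$) case must be handled by hand rather than by citing the lemma.

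First I would fix notation: for $i\ge 1$ we have $f_i(x)=x+e_i$ and $X_i=\{x:x_i=0\}$, while $f_0(x)=x$ and $X_0=\{x:\sum_{t=1}^m x_t=0\}$ is the set of even-weight vectors. Each $X_i$ has cardinality $2^{m-1}$, which is exactly what the proof of Theorem~\ref{th:123} requires. Next I would carry out the three cases for $i\ne j$:
\begin{itemize}
\item $i=0$, $j\ge 1$: $f_0(X_0)=X_0$ is the set of even-weight vectors, while $f_j(X_0)=X_0+e_j$ is the set of odd-weight vectors, so the intersection is empty.
\item $i\ge 1$, $j=0$: $f_i(X_i)=\{y:y_i=1\}$ and $f_0(X_i)=X_i=\{y:y_i=0\}$, again disjoint.
\item $i,j\ge 1$ with $i\ne j$: $f_i(X_i)=\{y:y_i=1\}$, whereas adding $e_j$ (which does not flip the $i$th coordinate) gives $f_j(X_i)=\{y:y_i=0\}$; disjoint.
\end{itemize}
This establishes that $\{f_0,\ldots,f_m\}$ together with $\{X_0,\ldots,X_m\}$ form a set of orthogonal permutations.

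Having verified orthogonality, I would substitute into the ratio formula of Theorem~\ref{th:123}. Every cross term $|f_i(X_i)\cap f_j(X_i)|$ with $j\ne i$ vanishes, so the double sum in \eqref{eq:345} is zero and the rebuilding ratio of the resulting $(m+3,m+1)$ MDS array code is exactly $\tfrac{1}{2}$. Existence of coefficients making the code MDS over a sufficiently large field is guaranteed by Theorem~\ref{zigzag-sets}. Finally, since Lemma~\ref{monotone function} shows $R(k)\ge \tfrac{1}{2}$ for every $k$, the construction is optimal, yielding $R(m+1)=\tfrac{1}{2}$. No step involves heavy computation; the only place where care is needed is the $i=0$ case, because the modified definition of $X_0$ means one cannot appeal to Lemma~\ref{lemma 3} and must instead use the even/odd-weight dichotomy explicitly.
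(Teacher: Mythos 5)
Your proof is correct and follows essentially the same route as the paper's: a case analysis verifying the orthogonality condition $|f_i(X_i)\cap f_j(X_i)|=0$ for $i\ne j$, followed by an appeal to Theorem~\ref{th:123} to get ratio $\tfrac{1}{2}$ and to Lemma~\ref{monotone function} for optimality. The only cosmetic difference is that you verify the $i,j\ge 1$ case by the direct coset computation rather than citing Lemma~\ref{lemma 3}, and you correctly flag that the modified $X_0$ falls outside the hypotheses of that lemma, which the paper handles identically.
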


\begin{IEEEproof}
Since $|B_{e_i} \backslash B_{e_j}|=1$ for any $i\neq j\neq 0$ we get by lemma \ref{lemma 3} $$f_i(X_i)\cap f_j(X_i)=\emptyset.$$
Note that $$f_i((x_1,x_2,...x_m))=(x_1,x_2,...,x_{i-1},\overline{x_i},x_{i+1},...,x_m)$$ and $$X_i=\{(x_1,x_2,...,x_m)\in F_2^m:x_i=0\}$$ thus
$$f_i(X_i)\cap f_0(X_i)=\{x\in F_2^m:x_i=1\}\cap \{x\in F_2^m:x_i=0\}=\emptyset.$$
Moreover,
\begin{align*}
 &f_0(X_0)\cap f_i(X_0) \\
=& \{x\in F_2^m:\sum_{i=1}^m x_i=0\}\cap \{x\in F_2^m:\sum_{i=1}^m x_i=1\}\\
=&\emptyset.
\end{align*}
Hence the permutations $f_0,\dots,f_m$ are orthogonal permutations and the ratio is $1/2$ by Theorem \ref{th:123}.
\end{IEEEproof}

Actually this set of orthogonal permutations is optimal in size, as the following theorem suggests.
\begin{thm}\label{thm:size}
Let $F$ be an orthogonal set of  permutations over the integers $[0,2^m-1]$, then the size of $F$ is at most $m+1.$
\end{thm}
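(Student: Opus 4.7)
The plan is to normalize so that one permutation is the identity, attach a $\pm 1$ sign vector to each remaining pair $(f_i, X_i)$, and then enlarge this collection to $2^{k-1}$ pairwise orthogonal nonzero vectors in $\mathbb{R}^{2^m}$ by taking pointwise products over all subsets of indices.

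First I would replace $f_i$ by $h_i := f_0^{-1} \circ f_i$; this preserves the orthogonality condition (apply $f_0^{-1}$ to both $f_i(X_i)$ and $f_j(X_i)$) and arranges $h_0 = \mathrm{id}$. Since $h_i(X_i)$ and $h_j(X_i)$ are disjoint subsets of size $2^{m-1}$ in a ground set of size $2^m$, they must be complementary: $h_j(X_i) = \overline{h_i(X_i)}$. Specialising $j = 0$ gives $h_i(X_i) = \overline{X_i}$ for every $i \geq 1$, and substituting back for $i, j \geq 1$ with $i \neq j$ yields $h_j(X_i) = X_i$. In other words, each $h_i$ with $i \geq 1$ swaps $X_i$ with its complement while fixing every other $X_l$ setwise.

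Now for $i \in \{1, \ldots, k-1\}$ set $v_i := 2\chi_{X_i} - \mathbf{1} \in \{\pm 1\}^{2^m}$, and for each $S \subseteq \{1, \ldots, k-1\}$ form the pointwise product $v_S := \prod_{i \in S} v_i$, with $v_\emptyset := \mathbf{1}$; each $v_S$ is again $\pm 1$-valued. Writing the permutation action on functions as $v \mapsto v \circ h^{-1}$, the previous step translates to $v_i \circ h_i^{-1} = -v_i$ and $v_l \circ h_i^{-1} = v_l$ for $l \neq i$; since this action is multiplicative on functions, $v_S \circ h_j^{-1} = -v_S$ whenever $j \in S$ and $v_S \circ h_j^{-1} = v_S$ otherwise. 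For any nonempty $S$, choosing $j \in S$ and using that the sum of a function's values is invariant under permutation of its domain gives $\sum_l v_S(l) = -\sum_l v_S(l)$, hence $\sum_l v_S(l) = 0$. Because $v_i^2 = \mathbf{1}$ pointwise, $v_S \cdot v_T = \sum_l v_{S \triangle T}(l)$, which vanishes for $S \neq T$ by the previous sentence. Thus $\{v_S\}_{S \subseteq \{1,\ldots,k-1\}}$ is a family of $2^{k-1}$ pairwise orthogonal nonzero vectors in $\mathbb{R}^{2^m}$, yielding $2^{k-1} \leq 2^m$ and hence $k \leq m+1$.

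The main obstacle is recognising that merely the orthogonality of the individual $v_i$'s gives only the useless bound $k-1 \leq 2^m$; the sharp estimate requires boosting to the full Boolean lattice of pointwise products $v_S$, which in turn relies on each $v_i$ being $\pm 1$-valued so that $v_i^2 = \mathbf{1}$.
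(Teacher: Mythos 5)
Your proof is correct, and the argument is genuinely different from the paper's. After the common normalization step (replacing $f_i$ by $f_0^{-1}\circ f_i$ so the first permutation is the identity), the paper proceeds by induction on $m$: it shows $|X_i\cap X_0|=2^{m-2}$ for all $i\geq 1$ via a pigeonhole contradiction using orthogonality, then restricts the remaining $k-1$ permutations to $[0,2^{m-1}-1]$ (setting $f_i^*(x)=f_i(x)-2^{m-1}$ with associated sets $X_i\cap X_0$) to obtain a smaller orthogonal family, giving $k-1\le m$ by the inductive hypothesis. Your argument is instead a direct, induction-free dimension count: you extract the stronger structural fact that each $h_i$ ($i\ge 1$) swaps $X_i$ with $\overline{X_i}$ and preserves every $X_l$ with $l\ge 1$, $l\ne i$, encode this through $\pm 1$ indicator vectors, and observe that the $2^{k-1}$ pointwise products $v_S$ are pairwise orthogonal because each nonempty $v_U$ is negated by some $h_j$ acting by precomposition, forcing $\sum_l v_U(l)=0$; the bound $2^{k-1}\le 2^m$ is then just a rank count in $\mathbb{R}^{2^m}$. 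What your route buys is brevity and an explicit view of the Walsh--Hadamard/character-like structure that the construction of Theorem 7 realizes; what the paper's route buys is elementariness and a recursive decomposition (restriction to $X_0$) in keeping with the combinatorial style of the rest of the paper. One small imprecision in your write-up: the sentence "fixing every other $X_l$ setwise" should be restricted to $l\ge 1$, since in fact $h_i(X_0)=\overline{X_0}$; but your construction deliberately omits index $0$ when forming the $v_S$, so the argument is unaffected.
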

\begin{IEEEproof}
We will prove it by induction on $m$. For $m=0$ there is nothing to prove. Let $F=\{f_0,f_1,...,f_{k-1}\}$ be a set of orthogonal permutations over the set $[0,2^m-1].$ We only need to show that $|F|=k\leq m+1.$ It is trivial to see that for any $g,h\in S_{2^m}$ the set $hFg=\{hf_0g,hf_1g,...,hf_{k-1}g\}$ is also a set of orthogonal permutations with sets $g^{-1}(X_0),g^{-1}(X_1),...,g^{-1}(X_{k-1}).$ Thus w.l.o.g. we can assume that $f_0$ is the identity permutation and $X_0=[0,2^{m-1}-1]$. From the orthogonality we get that
$$\cup_{i=1}^{k-1} f_i(X_0)=\overline{X_0}=[2^{m-1},2^{m}-1].$$ Note that for any $i\neq 0,|X_i\cap X_0|=\frac{|X_0|}{2}=2^{m-2}.$ Assume the contrary, thus w.l.o.g we can assume that $|X_i\cap X_0|>2^{m-2}$, otherwise $|X_i\cap \overline{X_0}|>2^{m-2}.$ For  any $j\neq i\neq 0$ we get that
\begin{equation}
\label{eq:444}
f_j(X_i\cap X_0), f_i(X_i\cap X_0)\subseteq \overline{X_0},
\end{equation}
\begin{equation}
\label{eq:445}
|f_j(X_i\cap X_0)|=|f_i(X_i\cap X_0|)>2^{m-2}=\frac{|\overline{X_0}|}{2}.
\end{equation}
From equations (\ref{eq:444}) and (\ref{eq:445}) we conclude that $f_j(X_i\cap X_0)\cap f_i(X_i\cap X_0)\neq \emptyset$, which contradicts the orthogonality property. Define the set of  permutations $F^*=\{f_i^*\}_{i=1}^{k-1}$ over the set of integers $[0,2^{m-1}-1]$ by $f_i^*(x)=f_i(x)-2^{m-1}$, which is a set of orthogonal permutations with sets $\{X_i\cap X_0\}_{i=1}^{k-1}$. By induction $k-1 \leq m$ and the result follows.
\end{IEEEproof}

The above theorem implies that any $(k+2,k)$ systematic MDS array code of size $2^m\times (k+2)$ with optimal update and ratio $\frac{1}{2}$, satisfies $k\leq m+1$. Notice that the code in Theorem \ref{orthogonal-permutations} achieves the upper bound, i.e. $k=m+1$.

\begin{xmpl}
Let $A$ be an array of size $4\times 3$. We construct a $(5,3)$ MDS array code for $A$ as in Theorem \ref{orthogonal-permutations} that accesses  $\frac{1}{2}$ of the remaining information in the array to rebuild any systematic node (see Figure \ref{fig2}).
\begin{figure}[t]
	\centering
	\includegraphics[scale=0.47]{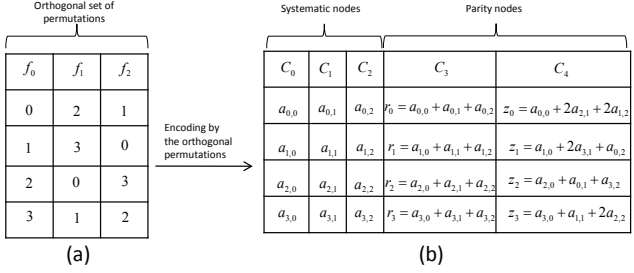}
\caption{$(a)$ The set of orthogonal permutations as in Theorem \ref{orthogonal-permutations} with sets $X_0=\{0,3\},X_1=\{0,1\},X_2=\{0,2\}$. $(b)$ A $(5,3)$ MDS array code generated by the orthogonal permutations. The first parity column $C_3$ is the row sum and the second parity column $C_4$ is generated by the zigzags. For example, zigzag $z_0$ contains the elements $a_{i,j}$ that satisfy $f_j(i)=0$.}
	\label{fig2}
\vspace{-0.5cm}
\end{figure}
For example, $X_1=\{0,1\}$, and for rebuilding  of node $1$ (column $C_1$) we  access the elements $a_{0,0},a_{0,2},a_{1,0},a_{1,2}$, and the following four parity elements
\begin{align*}
&r_0=a_{0,0}+a_{0,1}+a_{0,2}\\
&r_1=a_{1,0}+a_{1,1}+a_{1,2}\\
&z_{f_1(2)}=z_0=a_{0,0}+2a_{2,1}+2a_{1,2}\\
&z_{f_1(3)}=z_1=a_{1,0}+2a_{3,1}+a_{0,2}.
\end{align*}
It is trivial to rebuild node $1$ from the accessed information. Note that each of the surviving node accesses exactly $\frac{1}{2}$ of its elements. It can be easily verified that the other systematic nodes can be rebuilt the same way. Rebuilding a parity node is easily done by accessing all the information elements.
\end{xmpl}

%
%
\section{Code Duplication}
\label{code-duplication}
%
%
In this section, we are going to increase the number of columns in the constructed $(k+2,k)$ MDS codes with array size $2^m \times (k+2)$, such that $k > m+1$ and ratio is approximately $\frac{1}{2}$.

Let $\mathcal{C}$ be a $(k+2,k)$ array code where the zigzag sets $\{Z_l\}_{l=0}^{p-1}$ are defined by the set of permutations $\{f_i\}_{i=0}^{k-1} \subseteq S_p$ and $p$ is the number of rows in the array. For an integer $s$, an $s$-\emph{duplication code} $\mathcal{C}'$ is an $(sk+2,sk)$ MDS code with zigzag permutations defined by duplicating the $k$ permutations $s$ times each. Moreover, the first parity column is the row sums. The coefficients in the parities may be different from the code $\cC$. For an $s$-duplication code, denote the column corresponding to the  $t$-th $f_j$ as column $j^{(t)}$, $0 \le t \le s-1$. Call the columns $j^{(t)}$, $j \in [0,k-1]$, the $t$-th copy of the original code. An example of a $2$-duplication code of the code in Figure \ref{fig2} is illustrated in Figure \ref{fig:duplication}.

\begin{figure*}
	\centering
		\includegraphics[width=0.80\textwidth]{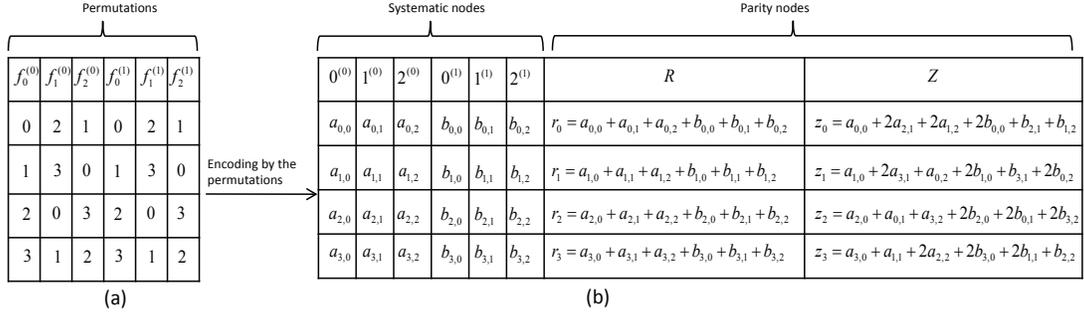}
	\caption{A $2$-duplication of the code in Figure \ref{fig2}. The code has $6$ information nodes and $2$ parity nodes. The ratio is $4/7$.}
	\label{fig:duplication}
\end{figure*}

\begin{thm} \label{lem13}
If a $(k+2,k)$ code $\mathcal{C}$ has ratio $R(\mathcal{C})$, then its $s$-duplication code $\mathcal{C}'$ has ratio at most $R(\mathcal{C})(1+\frac{s-1}{sk+1})$.
\end{thm}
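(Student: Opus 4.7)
The plan is to rebuild each systematic node $j^{(t)}$ of $\mathcal{C}'$ by lifting an optimal rebuilding strategy of column $j$ in $\mathcal{C}$, and then counting accesses carefully. Fix, for each $j \in [0,k-1]$, an optimal rebuilding choice $S^{(j)} = (S^{(j)}_0, \ldots, S^{(j)}_{p-1})$ in $\mathcal{C}$, where each $S^{(j)}_i$ is a row set or zigzag set containing $a_{i,j}$, and let $a_j$ be the resulting total number of accesses, so that $\sum_{j} a_j = p k (k+1) R(\mathcal{C})$. For every $j' \neq j$, let $T^{(j)}_{j'}$ denote the set of rows of column $j'$ that $S^{(j)}$ reads; then
\[
a_j \;=\; p \;+\; \sum_{j' \neq j} |T^{(j)}_{j'}|,
\]
where the first $p$ accounts for one parity access per erased element.

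To rebuild $j^{(t)}$ in $\mathcal{C}'$, I will use the same sets $S^{(j)}$, reinterpreted inside $\mathcal{C}'$. Since the $t$-th copy of the original code inherits the permutation $f_j$, each $S^{(j)}_i$ is still a row or zigzag set of $\mathcal{C}'$ containing $a_{i,j^{(t)}}$, so the reconstruction is valid. The access count then decomposes column by column: the two parity columns contribute $p$ elements; every sibling copy $j^{(t')}$ with $t' \neq t$ shares the permutation $f_j$ with $j^{(t)}$, so for each $i$ both the row set $R_i$ and the zigzag set $Z_{f_j(i)}$ pick out the same entry $a_{i, j^{(t')}}$, forcing all $p$ rows of such a column to be accessed and yielding $(s-1)p$ accesses; for any original column $j' \neq j$, each of its $s$ copies has the same permutation $f_{j'}$ as in $\mathcal{C}$ and therefore contributes exactly $|T^{(j)}_{j'}|$ accesses, for a total of $s \sum_{j' \neq j} |T^{(j)}_{j'}|$. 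Adding these pieces gives
\[
\mathrm{accesses}(j^{(t)}) \;=\; p + (s-1)p + s \sum_{j' \neq j} |T^{(j)}_{j'}| \;=\; p + (s-1)p + s(a_j - p) \;=\; s\, a_j.
\]

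Summing over the $sk$ systematic nodes of $\mathcal{C}'$ produces $\sum_{j,t} s\, a_j = s^2 \sum_{j} a_j = s^2 \cdot p k (k+1) R(\mathcal{C})$, and dividing by the denominator $p \cdot sk(sk+1)$ in the definition of $R(\mathcal{C}')$ yields
\[
R(\mathcal{C}') \;\le\; \frac{s(k+1)}{sk+1}\, R(\mathcal{C}) \;=\; R(\mathcal{C}) \left( 1 + \frac{s-1}{sk+1} \right),
\]
which is exactly the advertised bound.

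The step I expect to be the main subtlety is the identification of the accesses in the sibling copies of column $j$: one must observe that in any column of $\mathcal{C}'$ sharing the permutation $f_j$ with the erased column, a row set $R_i$ and the matching zigzag set $Z_{f_j(i)}$ select the same row $i$, so no access is saved by choosing row over zigzag and the entire sibling column must be read. This is precisely what produces the factor $\tfrac{s(k+1)}{sk+1}$ rather than a naive $\tfrac{s\cdot k}{sk}=1$, and once it is settled the remainder of the argument is the short arithmetic identity above.
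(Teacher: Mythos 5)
Your proof is correct and follows essentially the same strategy as the paper's: lift the rebuilding choice of $\mathcal{C}$ to $\mathcal{C}'$, observe that the $s-1$ sibling copies of the erased column must be read in full while every other column $j'^{(t')}$ contributes exactly what column $j'$ did in $\mathcal{C}$, and then do the arithmetic. You also supply an explicit justification (that a sibling column sharing $f_j$ has row set $R_i$ and zigzag set $Z_{f_j(i)}$ both pointing to the same entry, so no choice of row/zigzag saves an access) for the full-read claim that the paper states without comment, which is a welcome clarification.
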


\begin{IEEEproof}
Suppose in the rebuilding algorithm of $\mathcal{C}$, for column $i$,
elements of rows $J=\{j_1,j_2,\dots,j_u\}$ are rebuilt by zigzags, and the rest by rows. In $\mathcal{C}'$, all the $s$ columns corresponding to $f_i$ are rebuilt in the same way: the elements in rows $J$ are rebuilt by zigzags.
The rebuilding ratio of this algorithm is obviously an upper bound of $R(\cC')$.
W.l.o.g. assume column $i^{(0)}$ is erased, then the average (over all $i \in [0,k-1]$) of the number of elements needed to access in columns $l^{(t)}$, for all $l \neq i, l \in [0,k-1]$ is
$$R(\cC)p(k+1)-p,$$
for any $t \in [0,s-1]$. Here the term $-p$ corresponds to the access of the parity nodes in $\cC$. Moreover, we need to access all the elements in columns $i^{(t)}, 0<t \le s-1$, and access $p$ elements in the two parity columns. Therefore, the rebuilding ratio is
\begin{eqnarray*}
 R(\mathcal{C}')
&\le & \frac{s(R(\mathcal{C})p(k+1)-p)+(s-1)p+p}{p(sk+1)} \\
&=& R(\mathcal{C})\frac{s(k+1)}{sk+1} \\
&=& R(\mathcal{C})(1+\frac{s-1}{sk+1})
\end{eqnarray*}
\end{IEEEproof}

\begin{cor} \label{thm2}
The $s$-duplication of the code in Theorem \ref{orthogonal-permutations} has ratio $\frac{1}{2}(1+\frac{s-1}{s(m+1)+1})$, which is $\frac{1}{2}+\frac{1}{2(m+1)}$ for large $s$.
\end{cor}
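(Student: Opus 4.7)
The plan is to apply Theorem \ref{lem13} (the duplication bound) directly to the optimal code $\mathcal{C}$ constructed in Theorem \ref{orthogonal-permutations}. That code is a $(k+2,k)$ MDS array code with parameters $k=m+1$, $p=2^m$ rows, and optimal rebuilding ratio $R(\mathcal{C})=\tfrac{1}{2}$. The $s$-duplication $\mathcal{C}'$ is then an $(s(m+1)+2,\,s(m+1))$ MDS array code on the same $p=2^m$ rows, obtained by repeating each of the $m+1$ orthogonal zigzag permutations $s$ times and reselecting coefficients so that the MDS property is preserved; existence of such coefficients over a sufficiently large field is guaranteed by Theorem \ref{zigzag-sets}.

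The main calculation is then pure substitution. Plugging $R(\mathcal{C})=\tfrac{1}{2}$ and $k=m+1$ into the bound of Theorem \ref{lem13} gives
\[
R(\mathcal{C}')\;\le\;\frac{1}{2}\left(1+\frac{s-1}{s(m+1)+1}\right),
\]
which is exactly the first claimed expression. For the asymptotic statement I would take $s\to\infty$: the fraction $\frac{s-1}{s(m+1)+1}$ tends to $\frac{1}{m+1}$, so the right-hand side tends to $\frac{1}{2}+\frac{1}{2(m+1)}$. One should also note that this limit is consistent with the lower bound $R(k)\ge\tfrac{1}{2}$ from Lemma \ref{monotone function}, so the duplicated family stays near-optimal for every~$s$.

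I do not anticipate a real obstacle, since the corollary is an arithmetic consequence of two previously established facts. The only conceptual point worth stating explicitly in the write-up is that the ratio bound in Theorem \ref{lem13} is achieved by the specific rebuilding algorithm described in its proof, namely, rebuild all $s$ copies of the erased column using the same row/zigzag split that attains $R(\mathcal{C})=\tfrac{1}{2}$ in $\mathcal{C}$. This identifies the inequality of Theorem \ref{lem13} with the equality asserted in the corollary.
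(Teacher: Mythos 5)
Your proposal is correct and is essentially the paper's own (implicit) argument: Corollary \ref{thm2} is obtained by substituting $R(\mathcal{C})=\tfrac12$ and $k=m+1$ into the bound of Theorem \ref{lem13} and then letting $s\to\infty$, which the paper leaves to the reader. The only small caveat, which you partially flagged, is that Theorem \ref{lem13} only yields an upper bound on $R(\mathcal{C}')$ while the corollary is phrased as an equality; the matching lower bound is not separately established (the paper itself only proves optimality asymptotically in Theorem \ref{thm:opt rate}), so reading ``has ratio'' as ``has ratio at most'' is the faithful interpretation.
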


When $s=1$, the code $\mathcal{C}$ in Theorem \ref{orthogonal-permutations} matches the lower bound of $1/2$, but when $s \ge 2$, the ratio is greater than the lower bound.

By Theorem \ref{thm2} we know that duplication of an optimal code has rebuilding ratio a little more than $1/2$. In fact, removing the first information column (corresponding to the identity $f_0$), the code in Theorem \ref{orthogonal-permutations} has $m$ information columns, $2^m$ rows, and the ratio is $1/2+1/(2m)$ for large $s$. In general, given $2^m$ rows, we can construct a $(k+2,k)$ code by $k$ vectors $\{v_0,\dots,v_{k-1}\}$ as in Construction \ref{cnstr1}, with some ratio $r$. And its $s$-duplication has ratio $r'=(1+\frac{1}{k})r$ for large $s$. Is it possible that $k$ is much larger than $m$ and $r$ is very close to $1/2$, such that $r'<1/2+1/(2m)$? For example, if we can find a code with $k=m^2$ information columns, and $r=1/2+1/(m^{1.5})$, then its $s$-duplication has ratio $r'= (1+1/m^2)r \approx 1/2+O(1/m^{1.5})$ for large $s$. And this code would have better ratio than Corollary \ref{thm2}. In the following, we are going to show that such codes do not exist. Actually the construction made by the standard vector basis is optimal.

Let $D=D(V,E)$ be a directed graph with set of vertices $V$ and directed edges $E$. Let $S$ and $T$ be two disjoint subsets of vertices, we define the density of the set $S$ to be $d_S=\frac{E_S}{|S|^2}$ and the density between $S$ and $T$ to be $d_{S,T}=\frac{E_{S,T}}{2|S||T|}$, where $E_S$ is the number of edges with both of its endpoints in $S$, and $E_{S,T}$ is the number of  edges incident with a vertex in $S$ and a vertex in $T$. The following technical lemma will help us to prove the optimality of our construction.

\begin{lem}
\label{density lemma}
Let $D=D(V,E)$ be a directed graph and $S, T$ be subsets of $V$, such that $S\cap T=\emptyset, S\cup T=V.$ We have: \\
(i) If $d_{S,T}<\max{\{d_S,d_T\}}$, then $d_V<\max{\{d_S,d_T\}}.$ \\
(ii) If $d_{S,T}\geq\max{\{d_S,d_T\}}$, then $d_V\leq d_{S,T}.$
\end{lem}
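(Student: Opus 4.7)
The plan is to express $d_V$ as a convex combination of $d_S$, $d_T$, and $d_{S,T}$, from which both inequalities fall out immediately.

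First I would decompose the edge set. Since $S$ and $T$ partition $V$, every directed edge of $D$ lies in exactly one of the three classes counted by $E_S$, $E_T$, $E_{S,T}$. Writing $s=|S|$ and $t=|T|$ and substituting the density definitions, this yields
\[
E \;=\; E_S + E_T + E_{S,T} \;=\; d_S s^2 + d_T t^2 + 2\,d_{S,T}\,st.
\]
Since $|V|^2=(s+t)^2 = s^2 + 2st + t^2$, dividing gives
\[
d_V \;=\; \frac{s^2}{(s+t)^2}\,d_S \;+\; \frac{t^2}{(s+t)^2}\,d_T \;+\; \frac{2st}{(s+t)^2}\,d_{S,T},
\]
and the three nonnegative weights sum to $1$. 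Thus $d_V$ is a convex combination of $d_S$, $d_T$, and $d_{S,T}$. (I may assume $s,t\geq 1$, since otherwise $d_{S,T}$ is undefined and the hypothesis of the lemma excludes that case.)

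For part (ii), write $c=d_{S,T}$ and note that $c\geq \max\{d_S,d_T\}$ makes $c$ an upper bound on all three quantities in the convex combination, so $d_V\leq c$. For part (i), set $M=\max\{d_S,d_T\}$; then $d_S,d_T\leq M$ and by hypothesis $d_{S,T}<M$, so each term in the convex combination is at most its analogue with $M$ in place, and the third term has strictly positive weight $2st/(s+t)^2>0$ together with the strict inequality $d_{S,T}<M$. Hence $d_V<M$.

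There is no real obstacle beyond recognizing the convex-combination structure; once that is in place, each inequality is a one-line verification. The only subtlety is the degenerate case $s=0$ or $t=0$, which is handled by observing that the lemma's hypotheses require both sides of the partition to be nonempty for the three densities to be defined simultaneously.
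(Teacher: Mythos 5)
Your proof is correct and rests on exactly the same identity the paper uses, namely $d_V = \frac{|S|^2 d_S + |T|^2 d_T + 2|S||T|\,d_{S,T}}{|V|^2}$; the paper verifies the two inequalities by direct algebraic substitution, whereas you observe that this is a convex combination and read both parts off from that observation, which is a tidier packaging of the same argument. Your aside about the degenerate case $|S|=0$ or $|T|=0$ is a reasonable point the paper leaves implicit.
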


\begin{IEEEproof}
Note that $d_V=\frac{|S|^2d_S+|T|^2d_T+2|S||T|d_{S,T}}{|V|^2}.$ W.l.o.g assume that $d_S\geq d_T$ therefore
\begin{align*}
d_V&=\frac{|S|^2d_S+|T|^2d_T+2|S||T|d_{S,T}}{|V|^2}\\
&= \frac{|S|^2d_S+|T|^2d_S-|T|^2d_S+|T|^2d_T+2|S||T|d_{S}}{|V|^2}\\
&=\frac{d_S(|S|+|T|)^2-|T|^2(d_S-d_T)}{|V|^2}\\
&\leq d_S.
\end{align*}
If $d_{S,T}\geq\max{\{d_S,d_T\}}$ then
\begin{align*}
d_V&=\frac{|S|^2d_S+|T|^2d_T+2|S||T|d_{S,T}}{|V|^2}\\
&\leq \frac{|S|^2d_{S,T}+|T|^2d_{S,T}+2|S||T|d_{S,T}}{|V|^2}\\
&=d_{S,T}
\end{align*}
Thus the result follows.
\end{IEEEproof}

Define the directed graph $D_m=D_m(V,E)$ as $V=\{W:W\subseteq [m]\}$. We also view the vertices as binary vectors of length $n$ where each subset $W$ corresponds to its indicator vector. By abuse of notation, $W$ is used to represent both a subset of $[m]$ and a binary vector of length $m$.
There is a directed edge from $W_1$ to $W_2$ if and only if $|W_2 \backslash W_1|$ is odd.
Let $H$ be an induced subgraph of $D_m$. We construct the code $\cC(H)$ from the vetices of $H$ by Construction \ref{cnstr1}.
By Lemma \ref{lemma 3} we know an edge from $W_1$ to $W_2$ in $H$ means $f_{W_2}(X_{W_2}) \cap f_{W_1} (X_{W_2})= \emptyset$, so only half information from the column corresponding to $W_1$ is needed when we rebuild the column corresponding to $W_2$.
Then in an $s$-duplication of the code, when we rebuild a column in the $i$-th copy, the average ratio accessed in the $j$-th copy would be
$$\frac{|V(H)|^2 2^m-|E(H)|2^{m-1}}{|V(H)|^2 2^m} =
1-\frac{E(H)}{2|V(H)|^2}=1-\frac{d_H}{2}$$
for $i \neq j$, which equals the rebuilding ratio of the $s$-duplication code for large $s$. Namely,
\begin{equation}\label{eq:graph}
\lim_{s \to \infty}R(\cC')=1-\frac{d_H}{2}.
\end{equation}

The following theorem states that the ratio in Corollary \ref{thm2} of the $s$-duplication of the code from the standard basis is optimal among all codes constructed by binary vectors and duplication.

\begin{thm} \label{thm:opt rate}
For any induced subgraph $H$ of $D_m, d_H\leq \frac{m-1}{m}$.
Therefore, for fixed $m$ and large $s$, the ratio $\frac{1}{2}(1+\frac{1}{m})$  is optimal among all $s$-duplication codes constructed by binary vectors of length $m$ and Construction \ref{cnstr1}.
\end{thm}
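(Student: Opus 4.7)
My plan is to proceed by induction on $m$, with the base case $m = 2$ verified by enumerating the induced subgraphs of $D_2$ and checking that each has density at most $\tfrac{1}{2} = \tfrac{m-1}{m}$.

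For the inductive step, fix an induced subgraph $H$ of $D_m$ and partition $V(H) = S \sqcup T$ with $S = \{W \in V(H) : m \in W\}$ and $T = \{W \in V(H) : m \notin W\}$. The induced subgraph on $T$ sits inside $D_{m-1}$ on $2^{[m-1]}$, and the one on $S$ is isomorphic to an induced subgraph of $D_{m-1}$ via the bijection $W \mapsto W' := W \setminus \{m\}$: for $W_1, W_2 \in S$ the symbol $m$ lies in $W_1 \cap W_2$, so $W_2 \setminus W_1 = W_2' \setminus W_1' \subseteq [m-1]$. Hence the inductive hypothesis yields $d_S, d_T \leq \tfrac{m-2}{m-1}$.

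The crux is the estimate on $d_{S,T}$. For $W_1 \in S, W_2 \in T$, the directed edge $W_1 \to W_2$ in $H$ is equivalent to the edge $W_1' \to W_2$ in $D_{m-1}$, while the edge $W_2 \to W_1$ in $H$ (which requires $|W_1 \setminus W_2| = |W_1' \setminus W_2| + 1$ to be odd) is equivalent to the \emph{absence} of the edge $W_2 \to W_1'$ in $D_{m-1}$. Consequently
\[ E_{S,T} = e^{D_{m-1}}_{S',T} + |S||T| - e^{D_{m-1}}_{T,S'}, \qquad d_{S,T} = \tfrac{1}{2} + \frac{e^{D_{m-1}}_{S',T} - e^{D_{m-1}}_{T,S'}}{2|S||T|}. \]

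Plugging these into the identity $d_H \, |V(H)|^2 = |S|^2 d_S + |T|^2 d_T + 2|S||T|\,d_{S,T}$ underlying Lemma \ref{density lemma}, and invoking the inductive bounds, the inequality $d_H \leq \tfrac{m-1}{m}$ collapses to an algebraic condition on $|S|, |T|$ and the two $D_{m-1}$ edge counts. The main obstacle I anticipate is the regime $d_{S,T} \approx 1$ with $|S|, |T|$ both of order $n/2$: in the standard-basis example $\{e_1,\ldots,e_m\}$ one has $d_{S,T} = 1$ exactly, and the bound only closes because $d_S = 0$ strictly undercuts $\tfrac{m-2}{m-1}$. To make this robust, I plan to carry out the splitting argument over each coordinate $i \in [m]$ in turn, using the averaging identities $\sum_i |S_i||T_i| = \sum_{W_1, W_2 \in V(H)} |W_1 \setminus W_2|$ and $\sum_i E_{S_i,T_i} = \sum_{W_1 \to W_2 \in E(H)} |W_1 \triangle W_2|$ to locate a coordinate that either admits case (i) of Lemma \ref{density lemma} directly, or for which the refined case (ii) analysis combined with the inductive bounds closes the induction.
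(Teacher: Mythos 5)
Your plan diverges substantially from the paper's proof, and as written it has a genuine gap that you yourself flag but do not resolve. The paper partitions $V(H)$ not by a coordinate but by the \emph{parity of the vector weight}: $S$ is the set of even-weight vertices and $T$ the odd-weight ones. This choice is decisive because, as the paper shows, between any even vertex and any odd vertex there is exactly one directed edge (never zero, never two), so $d_{S,T}=\tfrac{1}{2}$ \emph{identically}. Lemma \ref{density lemma} then immediately forces a maximal-density subgraph to be entirely odd or entirely even, and a short linear-algebra argument (picking a basis $S$ for $\spun(V(H))$ inside $H$ and observing that each $u\in T$ must be non-orthogonal to some basis vector, since $u\cdot u=1$) bounds $d_{S,T}\le\frac{l-1}{l}\le\frac{m-1}{m}$ where $l=\dim\spun(V(H))\le m$; the even case lifts to odd vectors in $\mathbb{F}_2^{m+1}$.

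Your coordinate split $S=\{W:m\in W\}$, $T=\{W:m\notin W\}$ does not enjoy the crucial property $d_{S,T}=\tfrac{1}{2}$; as you correctly compute, for the extremal set $\{e_1,\dots,e_m\}$ one gets $d_{S,T}=1$ for \emph{every} coordinate $i$, with $|S_i|=1$, $|T_i|=m-1$. In that regime neither branch of Lemma \ref{density lemma} gives a bound below $\frac{m-1}{m}$, and the inductive estimate $d_{S},d_T\le\frac{m-2}{m-1}$ is a strict overcount of the actual values ($d_{S_i}=0$ when $|S_i|=1$); plugging the inductive bound into the density identity already overshoots $\frac{m-1}{m}$ for $m=3$. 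Your proposed remedy is to average the identity over all $m$ coordinate splits, but you do not carry this out, and the two averaging identities you write down do not by themselves recover the slack: summing the recursion yields a term $\sum_{W_1\to W_2}|W_1\triangle W_2|$ with no obvious cancellation against the $\frac{m-2}{m-1}$ overcounts, and in the extremal example every coordinate is equally ``bad,'' so there is no single good coordinate to select. As it stands, the crux step is a declared intention rather than a proof, so the argument is incomplete. To repair it you would essentially need to discover the parity split (or another split with $d_{S,T}$ provably at most $\tfrac{1}{2}$) — the ingredient the paper supplies and your proposal lacks.

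A smaller remark: the statement $d_H\le\frac{m-1}{m}$ actually fails for $m=1$ (two vertices $\emptyset,\{1\}$ with one edge give $d_H=\tfrac14>0$), so starting the induction at $m=2$ is necessary, not merely convenient; you do this correctly, but it is worth noting explicitly that the inductive step may only invoke the hypothesis for $m-1\ge 2$.
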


\begin{IEEEproof}
We say that a binary vector is an even (or odd) vector if $\|W_2\|_1$ is 0 (or 1).
For two binary vectors $W_1,W_2$, $|W_2 \backslash W_1|$ being odd is equivalent to
$$1=W_2 \cdot \overline{W_1}= W_2(1+W_1)=\|W_2\|_1+W_2\cdot W_1$$
When $W_2$ is odd, this means $W_2W_1=0$ or $W_1,W_2$ are orthogonal vectors.
When $W_2$ is even, $W_1,W_2$ are not orthogonal.
Hence, one can check that when $W_1$ and $W_2$ are both odd (or even), there is either no edge or 2 edges between them.
Moreover, when $W_1$ is odd and $W_2$ is even, exactly one of the following is true: $|W_2 \backslash W_1|$ is odd, or $|W_1 \backslash W_2|$ is odd. Thus we have exactly one edge between $W_1$ and $W_2$.

Assume that there exist subgraphs of $D_m$ with density higher than $\frac{m-1}{m}$.
Let $H$ be such a subgraph with maximal density. Let $S$ and $T$ be the set of even and odd vectors of $H$ respectively. Note that $d_{S,T}=\frac{1}{2}$, because between any even and odd vertices there is exactly one directed edge. It is trivial to check that if $\max{\{d_S,s_T\}}\leq \frac{1}{2}$ then $d_H\leq \frac{1}{2}$ which leads to a contradiction. However if $\max{\{d_S,s_T\}}> \frac{1}{2}=d_{S,T}$ then by Lemma \ref{density lemma} we get a contradiction for the assumption that $H$ is the subgraph with the maximal density. Thus $H$ contains only odd vectors or even vectors. Let $V(H)=\{v_1,...,v_k\}$, and assume that the dimension of the subspace spanned by these vectors in $\mathbb{F}_2^m$ is $l$ where $v_1,v_2,...v_l$ are basis for it. Define $S=\{v_1,...v_l\},T=\{v_{l+1},...,v_k\}$. The following two cases show that the density can not be higher than $\frac{m-1}{m}$.

{\bf $H$ contains only odd vectors:}  Let $u\in T$. Since $u\in \spun(S)$ there is at least one $v\in S$ such that
$u\cdot v\neq 0$ and thus $(u,v),(v,u)\notin E(H)$, therefore the number of directed edges between $u$ and $S$ is at most $2(l-1)$ for all $u \in T$, which means $d_{S,T}\leq \frac{l-1}{l}\leq \frac{m-1}{m}$ and we get a contradiction by Lemma \ref{density lemma}.

{\bf $H$ contains only even vectors:} Since the $v_i$'s are even the dimension of $\spun(S)$ is at most $m-1$ (since for example $(1,0,...,0)\notin \spun (S)$) thus $l\leq m-1.$ Let $H^*$ be the induced subgraph of $D_{m+1}$ with vertices $V(H^*)=\{(1,v_i)|v_i\in V(H))\}$. It is easy to see that all the vectors of $H^*$ are odd, $((1,v_i),(1,v_j)\in E(H^*)\text{ if and only if } (v_i,v_j)\in E(H)$  and the dimension of $\spun(V(H^*))$ is at most $l+1 \le m.$ By the case already proven for odd vectors we conclude that
\begin{align*}
d_H=d_{H^*}&=\frac{\dim(\spun(V(H^*)))-1}{\dim(\spun(V(H^*)))}\\
&\leq\frac{l+1-1}{l+1}\\
&\leq\frac{m-1}{m},
\end{align*}
and get a contradiction by Lemma \ref{density lemma}.	

Therefore, the asymptotic optimal rebuilding ratio for duplication codes generated by binary vectors is
$\frac{1}{2}(1+\frac{1}{m})$ by \eqref{eq:graph}
\end{IEEEproof}

%
%
\section{Finite Field Size of a Code}\label{section 5}
%
%
We have already shown that the code $\mathcal{C}$ in Theorem \ref{orthogonal-permutations} has the best ratio, and if the code is over some big enough finite field $\mathbb{F}$, it is MDS. In this section, we will discuss in more detail about the field size such that the code $\mathcal{C}$ or its $s$-duplication $\mathcal{C}'$ is MDS.

Consider the code $\mathcal{C}$ constructed by Theorem \ref{orthogonal-permutations}.
Let the information in row $i$, column $j$ be $a_{i,j} \in \mathbb{F}$. Let its row coefficient be $\alpha_{i,j}$ and zigzag coefficient be $\beta_{i,j}$.
For a row set $R_u=\{a_{u,0},a_{u,1},\dots,a_{u,k}\}$, the row parity is $r_u=\sum_{j=0}^{k} \alpha_{u,j}a_{u,j}$. For a zigzag set $Z_u=\{a_{l_0, 0},a_{l_1,1},\dots,a_{l_k,k}\}$, the zigzag parity is $z_u = \sum_{j=0}^{k}\beta_{l_j,j}a_{l_j,j}$.

The $(m+3,m+1)$ code is MDS if we can recover the information when 1 or 2 columns are erased. It is easy to do rebuilding when 1 or 0 information column is erased. And $\alpha_{i,j}$, $\beta_{i,j}$ should be non-zero if one information column and one parity column are erased. For now, we will assume that $\alpha_{i,j}=1$ for all $i,j$.

\begin{cnstr} \label{cons3}
For the code $\mathcal{C}$ in Theorem \ref{orthogonal-permutations} over $\mathbb{F}_3$, define $u_i=\sum_{l=0}^i e_l$ for $0 \le i \le m$. Assign row coefficients as $\alpha_{i,j}=1$ for all $i,j$, and zigzag coefficients as
$$\beta_{i,j}= \begin{cases}
2, & \text{ if } u_j \cdot i =1 \\
1, &\text{ otherwise,}
\end{cases}$$
where $i=(i_1,\dots,i_m)$ is represented in binary and the operations are done over $\mathbb{F}_2$.
\end{cnstr}

\begin{thm} \label{thm6}
Construction \ref{cons3} is an $(m+3,m+1)$ MDS code.
\end{thm}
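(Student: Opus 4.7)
The plan is to verify the MDS property by checking that we can recover from any two simultaneous erasures. The two-parity-erasure case is immediate (all information is present), and the case of one information column and one parity column is handled by observing that every $\beta_{i,j}\in\{1,2\}\subseteq\mathbb{F}_3^{*}$ is invertible. The nontrivial case is erasure of two information columns $j_1<j_2$, and this is where all of the work will go.

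Fix erased columns $j_1<j_2$, and let $v=e_{j_1}+e_{j_2}$ (recall $e_0=0$). First I would use the row parities to obtain, for every row $i$, the sum $a_{i,j_1}+a_{i,j_2}=c_i$ for a known constant $c_i$. Next, for each zigzag index $u$, the zigzag parity couples exactly one unknown from each erased column: the row-$i$ contribution to $Z_u$ from column $j_1$ satisfies $f_{j_1}(i)=u$, and the contribution from column $j_2$ sits in row $i'=f_{j_2}^{-1}(f_{j_1}(i))=i+v$. Substituting $a_{i+v,j_2}=c_{i+v}-a_{i+v,j_1}$ turns the zigzag equation indexed by $u=f_{j_1}(i)$ into
\[
\beta_{i,j_1}\,a_{i,j_1}-\beta_{i+v,j_2}\,a_{i+v,j_1}=\text{(known)}.
\]
Applying the same substitution to the zigzag equation indexed by $f_{j_1}(i+v)$ gives a second linear equation in the same two unknowns $a_{i,j_1},a_{i+v,j_1}$, so the whole linear system decouples into $2\times 2$ blocks, one for each pair $\{i,i+v\}$.

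The determinant of each block equals
\[
\Delta_i \;=\; \beta_{i,j_1}\,\beta_{i+v,j_1}\,-\,\beta_{i,j_2}\,\beta_{i+v,j_2},
\]
and the MDS condition reduces to showing $\Delta_i\neq 0$ in $\mathbb{F}_3$ for every $i$ and every pair $j_1<j_2$. Using the defining formula $\beta_{i,j}=2$ iff $u_j\cdot i=1$ (else $1$), I will compute the two relevant inner products $u_{j_t}\cdot v$ over $\mathbb{F}_2$. The key combinatorial identities are $u_{j_1}\cdot v=u_{j_1}\cdot e_{j_1}+u_{j_1}\cdot e_{j_2}=1+0=1$ and $u_{j_2}\cdot v=1+1=0$ when $1\le j_1<j_2$, so the pair $(u_{j_1}\cdot i,\,u_{j_1}\cdot(i+v))$ is always $(0,1)$ or $(1,0)$, while $(u_{j_2}\cdot i,\,u_{j_2}\cdot(i+v))$ is always $(0,0)$ or $(1,1)$. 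Therefore $\beta_{i,j_1}\beta_{i+v,j_1}=2$ and $\beta_{i,j_2}\beta_{i+v,j_2}\in\{1,4\}=\{1\}$ in $\mathbb{F}_3$, giving $\Delta_i=2-1=1\neq 0$.

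Finally, I would handle the boundary case $j_1=0$ separately, since $f_0$ is the identity and $X_0$ was modified: here $v=e_{j_2}$, $\beta_{i,0}=1$ for all $i$ because $u_0=0$, and $u_{j_2}\cdot v=u_{j_2}\cdot e_{j_2}=1$, so one of $\beta_{i,j_2},\beta_{i+v,j_2}$ is $1$ and the other is $2$; hence $\Delta_i=1-2=-1\neq 0$ in $\mathbb{F}_3$. The main obstacle is really just the bookkeeping for this last determinant computation: one has to make sure the definition of $u_j$ interacts correctly with the orthogonal-permutation structure of Theorem~\ref{orthogonal-permutations} in the boundary case $j_1=0$, because there the set $X_0$ is atypical and it is tempting to mis-index the zigzag that pairs row $i$ of column $0$ with row $i+e_{j_2}$ of column $j_2$.
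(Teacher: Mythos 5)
Your proposal is correct and takes essentially the same approach as the paper: pair rows $i$ and $i+v$ with $v=e_{j_1}+e_{j_2}$, reduce to a $2\times 2$ determinant condition $\beta_{i,j_1}\beta_{i+v,j_1}\neq\beta_{i,j_2}\beta_{i+v,j_2}$, and verify it by computing the inner products $u_{j_t}\cdot v$ over $\mathbb{F}_2$. One small improvement you make over the paper's text: the paper's displayed computation $u_i\cdot(r+r')=\sum_{l=0}^i e_l(e_i+e_j)=1$ and $u_j\cdot(r+r')=e_i^2+e_j^2=0$ tacitly assumes $i\ge 1$; when $i=0$ one has $e_0=0$, $u_0=0$, so in fact $\beta_{r,0}=\beta_{r',0}=1$ while $\beta_{r,j}\neq\beta_{r',j}$ (the roles are swapped), and the final product inequality $1\neq 2$ still holds for a slightly different reason. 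You correctly flag and handle this boundary case $j_1=0$ separately, which the paper's proof as written elides.
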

\begin{IEEEproof}
In an erasure of two systematic columns $i,j\in [0,m],i<j$, we access the entire remaining information in the array. Set $r'=r+e_i+e_j$, and recall that $a_{r,i}\in Z_{l}$ if $l=r+e_i$, thus $a_{r,i},a_{r',j}\in Z_{r+e_i}$ and $a_{r,j},a_{r',i}\in Z_{r+e_j}.$ From the two parity columns we need to solve the following equations (for some $y_1,y_2,y_3,y_4 \in \mathbb{F}_3$)
$$
\left[\begin{array}{cccc}
1 & 1 & 0 & 0 \\
0 & 0 & 1 & 1 \\
\beta_{r,i} & 0 & 0 & \beta_{r',j} \\
0 & \beta_{r,j} & \beta_{r',i} & 0
\end{array} \right]
\left[ \begin{array}{c}
a_{r,i} \\
a_{r,j} \\
a_{r',i} \\
a_{r',j} \\
\end{array} \right]
= \left[ \begin{array}{c}
y_1 \\
y_2 \\
y_3 \\
y_4
\end{array} \right].
$$
This set of equations is solvable if
\begin{equation} \label{eq1}
\beta_{r,i} \beta_{r',i} \neq \beta_{r,j}  \beta_{r',j}.
\end{equation}
For columns $0 \le i<j \le m$ and rows $r,r'$ defined above,
$$u_i \cdot r + u_i \cdot r' = u_i \cdot (r+ r') = \sum_{l=0}^{i}e_l (e_i+e_j) = e_j^2=1,$$
where the calculations  are done modulo 2. And
$$u_j \cdot r + u_j \cdot r' = u_j \cdot (r+ r') = \sum_{l=0}^{j}e_l (e_i+e_j) =  e_i^2+e_j^2=0$$
Thus $\beta_{r,i}\neq \beta_{r',i}$ and $\beta_{r,j}= \beta_{r',j}$. Note that each of the $\beta$'s is either $1$ or $2$, so \eqref{eq1} is satisfied and $\mathbb{F}_3$ ensures the code to be MDS.
\end{IEEEproof}

It is clear that $\mathbb{F}_2$ does not suffice for an MDS code, so $3$ is the optimal field size. The coefficients in Figure \ref{fig2} are assigned by Construction \ref{cons3}.

For $s$-duplication code $\mathcal{C}'$ in Theorem \ref{thm2}, denote the coefficients for the element in row $i$ and column $j^{(t)}$ by $\alpha_{i,j}^{(t)}$ and $\beta_{i,j}^{(t)}$, $0 \le t \le s-1$. Let $\mathbb{F}_q$ be a field of size $q$, and suppose its elements are $\{0,a^0,a^1,\dots,a^{q-2}\}$ for some primitive element $a$.

\begin{cnstr} \label{cons4}
For the $s$-duplication code $\mathcal{C}'$ in Theorem \ref{thm2} over $\mathbb{F}_q$, assign $\alpha_{i,j}^{(t)}=1$ for all $i,j,t$.
For odd $q$, let $s \le q-1$ and assign for all $t \in [0,s-1]$
$$\beta_{i,j}^{(t)}= \left\{
\begin{array}{ll}
a^{t+1}, & \text{if }u_j \cdot i=1 \\
a^{t},  & \text{o.w.}
\end{array}
\right.
$$
where $u_j=\sum_{l=0}^{j}e_l$. For even $q$ (power of 2), let $s \le q-2$ and assign for all $t \in [0,s-1]$
$$\beta_{i,j}^{(t)}= \left\{
\begin{array}{ll}
a^{-t-1}, & \text{if }u_j \cdot i=1 \\
a^{t+1},  & \text{o.w.}
\end{array}
\right.
$$
\end{cnstr}

Notice that the coefficients in each duplication has the same pattern as Construction \ref{cons3} except that values 1 and 2 are replaced by $a^t$ and $a^{t+1}$ if $q$ is odd (or $a^{t+1}$ and $a^{-t-1}$ if $q$ is even).

\begin{thm} \label{thm3}
Construction \ref{cons4} is an $(s(m+1)+2,s(m+1))$ MDS code.
\end{thm}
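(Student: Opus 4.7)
The plan is to reduce the MDS property to a finite collection of $2\times 2$ (really $4\times 4$) linear systems and verify each is nonsingular via an explicit exponent computation in $\mathbb{F}_q^\ast$, mirroring Theorem~\ref{thm6} but now accounting for the copy indices introduced by duplication. First I would dispose of the easy cases: if any one column (systematic or parity) is erased, rebuilding is immediate from the row sums (since $\alpha^{(t)}_{i,j}=1$) and the nonzero zigzag coefficients. If two parities are erased, the systematic part is intact; if one parity and one systematic column are erased, nonvanishing of all $\beta^{(t)}_{i,j}$ suffices. So the real content is the case of two erased systematic columns, say $i^{(t_1)}$ and $j^{(t_2)}$.

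I would partition this case into three subcases: (1) $i=j$ and $t_1\neq t_2$ (same underlying permutation, different copies); (2) $i\neq j$ and $t_1=t_2$ (same copy, effectively reducing to Theorem~\ref{thm6} with the letter $2$ replaced by $a$ or $a^{\pm 1}$); (3) $i\neq j$ and $t_1\neq t_2$ (both indices differ). In subcase~(1) both erased columns share the permutation $f_i$, so for each row $r$ the row parity and the zigzag $z_{f_i(r)}$ yield a $2\times 2$ system in $(a_{r,i^{(t_1)}},a_{r,i^{(t_2)}})$ whose determinant is $\beta^{(t_1)}_{r,i}-\beta^{(t_2)}_{r,i}$; I would check this is nonzero using that $a$ has order $q-1$ and the relevant exponents differ by at most $q-2<q-1$. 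In subcases~(2) and~(3) I would, exactly as in Theorem~\ref{thm6}, pair rows $r$ and $r'=r+e_i+e_j$ so that the four unknowns $\{a_{r,i^{(t_1)}},a_{r,j^{(t_2)}},a_{r',i^{(t_1)}},a_{r',j^{(t_2)}}\}$ sit in the same system; the solvability condition becomes
\[
\beta^{(t_1)}_{r,i}\,\beta^{(t_1)}_{r',i}\;\neq\;\beta^{(t_2)}_{r,j}\,\beta^{(t_2)}_{r',j}.
\]

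The computation of the two products reuses the identities from the proof of Theorem~\ref{thm6}: $u_i\cdot r+u_i\cdot r'=1$ (so the two $\beta^{(t_1)}$ values at column $i$ are the two distinct values taken in copy $t_1$) while $u_j\cdot r=u_j\cdot r'$ (so the two $\beta^{(t_2)}$ values at column $j$ are equal). For odd $q$ the left-hand product then equals $a^{2t_1+1}$ and the right-hand product equals $a^{2t_2}$ or $a^{2t_2+2}$, so the required inequality is $(q-1)\nmid (2(t_1-t_2)\pm 1)$; since $q-1$ is even and the exponent is odd, this is automatic. For even $q$ the left-hand product collapses to $a^{0}=1$ (the $a^{t_1+1}$ and $a^{-t_1-1}$ cancel) and the right-hand product is $a^{\pm 2(t_2+1)}$; since $q-1$ is odd and $t_2+1\le s\le q-2<q-1$, divisibility $(q-1)\mid 2(t_2+1)$ reduces to $(q-1)\mid (t_2+1)$, which fails.

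The main obstacle will be bookkeeping in subcase~(3): one must verify that the $4\times 4$ coupling really is the one from Theorem~\ref{thm6}, i.e.\ that the zigzag through $a_{r,i^{(t_1)}}$ hits exactly $a_{r',j^{(t_2)}}$ among the unknowns and vice-versa, and this uses $f_i(r)=r+e_i$ together with the fact that the $t$-th copy uses the \emph{same} permutation $f_j$ as the original. Once this is laid out, the three cases are uniformly handled by the exponent argument above and the bound $s\le q-1$ (odd $q$) or $s\le q-2$ (even $q$) is exactly what is needed to keep the nonzero exponents inside a single period of $a$, so the MDS property follows.
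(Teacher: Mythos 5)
Your proof is correct and follows essentially the same approach as the paper: reduce to the $\beta^{(t_1)}_{r,i}\beta^{(t_1)}_{r',i}\neq\beta^{(t_2)}_{r,j}\beta^{(t_2)}_{r',j}$ determinant condition via the row-pairing $r,r'=r+e_i+e_j$, then verify it with the same parity-of-exponent argument in $\mathbb{F}_q^\ast$ (odd exponent versus even exponent when $q$ is odd; $a^0$ versus a nonzero exponent bounded by $q-2$ when $q$ is even). The only cosmetic difference is that you split the case $i\neq j$ into $t_1=t_2$ and $t_1\neq t_2$ before observing they are handled identically, whereas the paper treats them uniformly from the start; you also correctly attribute the key identities to Theorem~\ref{thm6}, fixing a self-referential typo in the paper's proof.
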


\begin{IEEEproof}
For the two elements in columns $i^{(t_1)},i^{(t_2)}$ and row $r$, $t_1 \neq t_2$, we can see that they are in the same row set and the same zigzag set. The corresponding two equations from the two parity elements are linearly independent if $\beta_{r,i}^{(t_1)} \neq \beta_{r,i}^{(t_2)}$, which is satisfied by the construction.

For the four elements in columns $i^{(t_1)},j^{(t_2)}$ and rows $r,r'=r+e_i+e_j$, $0 \le t_1, t_2 \le s-1$, $0 \le i < j \le m$, the code is MDS if
$$\beta_{r,i}^{(t_1)} \beta_{r',i}^{(t_1)} \neq \beta_{r,j}^{(t_2)} \beta_{r',j}^{(t_2)}.$$
By proof of Theorem \ref{thm3}, we know that  $\beta_{r,i}^{(t_1)}\neq \beta_{r',i}^{(t_1)}$, and $\beta_{r,j}^{(t_2)}= \beta_{r',j}^{(t_2)}=a^x$ for some $x$. When $q$ is odd,
$$\beta_{r,i}^{(t_1)} \beta_{r',i}^{(t_1)}=a^{t_1}a^{t_1+1}=a^{2t_1+1} \neq a^{2x}$$
for any $x$ and $t_1$. When $q$ is even,
$$\beta_{r,i}^{(t_1)} \beta_{r',i}^{(t_1)}=a^{t_1+1}a^{-t_1-1}=a^0 \neq a^{2x}$$ for any $t_1$ and $1 \le x \le q-2$ (mod $q-1$). And by construction, $x=t_2+1$ or $x=-t_2-1$ for $0 \le t_2 \le s-1 \le q-3$, so $1 \le x \le q-2$ (mod $q-1$).
Hence, the construction is MDS.
\end{IEEEproof}

\begin{thm}
For an MDS $s$-duplication code, we need a finite field $\mathbb{F}_q$ of size  $q \ge s+1$. Therefore, Theorem \ref{thm3} is optimal for odd $q$.
\end{thm}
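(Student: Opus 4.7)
The plan is to exploit the fact that in the $s$-duplication code, all $s$ copies of the same original column share the same zigzag permutation, which forces many coefficients to coexist in a very restricted way in each row-zigzag pair. I will consider the erasure pattern in which two columns $j^{(t_1)}$ and $j^{(t_2)}$ with $t_1 \neq t_2$ are lost (both being duplicates of the same original column $j$), and extract a counting constraint on the coefficients.

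First I would observe that, because the columns $j^{(0)},\dots,j^{(s-1)}$ all correspond to the same permutation $f_j$, for every row $i$ the $s$ elements $a_{i,j}^{(0)},\dots,a_{i,j}^{(s-1)}$ lie in the common row set $R_i$ and in the common zigzag set $Z_{f_j(i)}$. Thus, once columns $j^{(t_1)}$ and $j^{(t_2)}$ are erased, for each row $i$ the only available equations involving $a_{i,j}^{(t_1)}$ and $a_{i,j}^{(t_2)}$ come from the row parity in row $i$ and the zigzag parity indexed by $f_j(i)$, giving the $2\times 2$ system
\[
\begin{pmatrix} \alpha_{i,j}^{(t_1)} & \alpha_{i,j}^{(t_2)} \\ \beta_{i,j}^{(t_1)} & \beta_{i,j}^{(t_2)} \end{pmatrix}
\begin{pmatrix} a_{i,j}^{(t_1)} \\ a_{i,j}^{(t_2)} \end{pmatrix}
= \begin{pmatrix} y_1 \\ y_2 \end{pmatrix}.
\]
The MDS property demands that this system is invertible for every row $i$ and every pair $t_1 \neq t_2$.

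Next, using the fact (already noted in the paper) that all $\alpha_{i,j}^{(t)}$ and $\beta_{i,j}^{(t)}$ must be nonzero in order to correct a single information column together with a single parity column, I can rewrite invertibility as
\[
\frac{\beta_{i,j}^{(t_1)}}{\alpha_{i,j}^{(t_1)}} \;\neq\; \frac{\beta_{i,j}^{(t_2)}}{\alpha_{i,j}^{(t_2)}}.
\]
Fixing a row $i$ and original column $j$, the $s$ ratios $\rho^{(t)} = \beta_{i,j}^{(t)}/\alpha_{i,j}^{(t)}$, for $t \in [0,s-1]$, must therefore be pairwise distinct nonzero elements of $\mathbb{F}_q$. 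Since $|\mathbb{F}_q^*| = q-1$, this forces $q-1 \ge s$, i.e.\ $q \ge s+1$.

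Combining this lower bound with Theorem \ref{thm3}, which explicitly realizes an MDS $s$-duplication code over any odd $\mathbb{F}_q$ with $s \le q-1$, shows that for odd $q$ the construction meets the bound exactly and hence is optimal. The only subtle point is justifying that all coefficients are nonzero; I expect this to be the easiest part, since a zero row or zigzag coefficient for an information element would already prevent recovery from the single erasure of that information column paired with the complementary parity column, contradicting the $(k+2,k)$ MDS property. No other obstacle is anticipated; the argument is essentially a pigeonhole on $\mathbb{F}_q^*$.
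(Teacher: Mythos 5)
Your proposal is correct and follows essentially the same argument as the paper: erase two duplicate columns $j^{(t_1)},j^{(t_2)}$, observe they share both row and zigzag sets, reduce invertibility of the resulting $2\times 2$ coefficient matrix to distinctness of the ratios $\beta_{i,j}^{(t)}/\alpha_{i,j}^{(t)}$, note these must be nonzero, and apply pigeonhole on $\mathbb{F}_q^*$ to get $q\ge s+1$. No meaningful difference from the paper's proof.
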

\begin{IEEEproof}
Consider the two information elements in row $i$ and columns $j^{(t_1)},j^{(t_2)}$, which are in the same row and zigzag sets, for $t_1 \neq t_2 \in [0,s-1]$. The code is MDS only if
$$\left[ \begin{array}{cc}
\alpha_{i,j}^{(t_1)} & \alpha_{i,j}^{(t_2)} \\
\beta_{i,j}^{(t_1)} & \beta_{i,j}^{(t_2)}
\end{array} \right]$$
has full rank. All the coefficients are nonzero (consider erasing a parity column and a systematic column). Thus,
$(\alpha_{i,j}^{(t_1)})^{-1} \beta_{i,j}^{(t_1)} \neq (\alpha_{i,j}^{(t_2)})^{-1} \beta_{i,j}^{(t_2)}$, and $(\alpha_{i,j}^{(i)})^{-1} \beta_{i,j}^{(i)}$ are distinct nonzero elements in $\mathbb{F}_q$, for $i \in [0,s-1]$. So $q \ge s+1$.
\end{IEEEproof}

The coefficients in Figure \ref{fig:duplication} are assigned as Construction \ref{cons4} and $\mathbb{F}_3$ is used. One can check that any two column erases can be rebuilt in this code.

Consider for example of an $s$-duplication code with $m=10$ of the code in Theorem \ref{orthogonal-permutations}, the array is of size $1024 \times (11s+2)$. For $s=2$ and $s=6$, the ratio is $0.522$ and $0.537$ by Corollary \ref{thm2}, the code length is $24$ and $68$, and the field size needed can be $4$ and $8$ by Theorem \ref{thm3}, respectively. Both of these two codes are suitable for practical applications.

As noted before the optimal construction yields to a ratio of $1/2+1/m$ by using duplication of the code. However the field size is a linear function of the number of duplication of the code.
Is it possible to extend the number of columns in the code while using a constant field size? We know how to have $O(m^3)$ columns by using $O(m^2)$ duplications, however, the field size is $O(m^2)$.

The following code construction is a modification of Example \ref{xmpl1} and uses only the field of size $9$. The rebuilding ratio is asymptotically optimal, namely, $1/2+O(1/m)$.
Let the elements of $\mathbb{F}_9$ be $\{0,a^0,a^1,\dots,a^7\}$ where $a$ is a primitive element.

\begin{cnstr} \label{k/3}
Let $3|m$, and consider the following set of vectors $S\subset \mathbb{F}_2^m$: for each vector $v=(v_1,\dots,v_m) \in S$, $\|v\|_1=3$ and $v_i,v_j,v_l=1$ for some $i \in [1,m/3],j \in [m/3+1,2m/3], l \in [2m/3+1,m]$. For simplicity, we write $v=\{i,j,l\}$. Construct the $(k+2,k)$ code as in Construction \ref{cnstr1} using the set of vectors $S$, hence $k=|S|=(\frac{m}{3})^3=\frac{m^3}{27}$.
For $i\in [1,m/3]$, define $M_i = \sum_{t=1}^{i}e_t$. Similarly, for $j \in [m/3+1,2m/3]$, define $M_j=\sum_{t=m/3+1}^{j}e_t$ and for $l \in [2m/3+1,m]$, let $M_l=\sum_{t=2m/3+1}^{l}e_t$.
Then define a $3 \times m$ matrix
$$M_v=\left[
\begin{array}{lll}
M_i^T & M_j^T & M_l^T
\end{array}
\right]$$
 for $v=\{i,j,l\}$, and $M_i^T$ is the transpose of $M_i$.
Assign the row coefficients as $1$ and the zigzag coefficient for row $r$ column $c(v)$ as $a^t$, where $t=rM_v $ (in its binary expansion) and $a$ is the primitive element.
\end{cnstr}

For example, let $m=6$, and $v=\{1,4,6\}=(1,0,0,1,0,1) \in S$. The corresponding matrix is
$$M_v = \left[
\begin{array}{llllll}
1 & 0 & 0 & 0 & 0 & 0 \\
0 & 0 & 1 & 1 & 0 & 0 \\
0 & 0 & 0 & 0 & 1 & 1
\end{array}
\right]^T.
$$
For row $r=26=(0,1,1,0,1,0)$, we have
$$ t = rM_v  = (0,1,1)=3,$$
and the zigzag coefficient is $a^3$.

\begin{thm}
Construction \ref{k/3} is a $(k+2,k)$ MDS code with array size $2^m \times (k+2)$ and $k=m^3/27$. Moreover, the rebuilding ratio is $\frac{1}{2}+\frac{9}{2m}$ for large $m$.
\end{thm}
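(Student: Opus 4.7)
The proof splits into computing the rebuilding ratio and verifying the MDS property. Part (a) applies Theorem~\ref{th:123} and Lemma~\ref{lemma 3} combinatorially to count pairs in $S$; part (b) follows the two-erasure analysis of Theorem~\ref{thm6} and reduces to an arithmetic check modulo $8=|\mathbb{F}_9^*|$.

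For the ratio, Theorem~\ref{th:123} gives $R=\frac{1}{2}+\frac{N}{2^m k(k+1)}$ where
$N=\sum_{v\in S}\sum_{u\in S\setminus\{v\}}|f_v(X_v)\cap f_u(X_v)|$.
By Lemma~\ref{lemma 3} each summand is $2^{m-1}$ when $|B_v\setminus B_u|$ is even and $0$ otherwise. Since every $v\in S$ has weight $3$, evenness forces $|B_v\cap B_u|\in\{1,3\}$, and the value $3$ is ruled out by $u\ne v$. So I count $u\in S$ sharing exactly one coordinate with a given $v=\{i,j,l\}$: pick one of three thirds to supply the shared coordinate, then choose the two remaining coordinates freely in the other thirds, giving $3(m/3-1)^2$ such $u$. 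Hence $N=3k(m/3-1)^2\cdot 2^{m-1}$, and substituting $k=m^3/27$ yields $R=\frac{1}{2}+\frac{3(m/3-1)^2}{2(k+1)}=\frac{1}{2}+\frac{9}{2m}+o(1/m)$. This part is pure bookkeeping.

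For the MDS property, only an erasure of two systematic columns $c(v),c(u)$ is non-trivial. Following Theorem~\ref{thm6}, the row and zigzag parities decouple into independent $4\times 4$ systems indexed by unordered row pairs $\{r,r'\}$ with $r'=r+v+u$, and invertibility of each system reduces to
$\beta_{r,c(v)}\,\beta_{r',c(v)}\ne\beta_{r,c(u)}\,\beta_{r',c(u)}$. Writing $\beta_{r,c(v)}=a^{t_1}$, etc., with $a$ of order $8$ in $\mathbb{F}_9^*$, this becomes $t_1+t_1'\not\equiv t_2+t_2'\pmod 8$, where $t_*,t_*'\in[0,7]$ are $rM_*,r'M_*$ read as $3$-bit integers. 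Using the bitwise identity $b+(b\oplus d)=2b$ if $d=0$ and $=1$ if $d=1$, I would obtain a closed-form expression for each side mod $8$ in terms of $rM_v,rM_u$ and the flip vectors $\delta_v=(v+u)M_v$, $\delta_u=(v+u)M_u$.

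The key combinatorial input is that $\delta_v$ and $\delta_u$ are complementary in exactly the thirds where $v$ and $u$ differ, and both zero in the thirds where they agree (directly from the definitions of $M_v,M_u$); moreover $rM_v$ and $rM_u$ agree on the bits corresponding to agreeing thirds. Splitting on the number $s\in\{1,2,3\}$ of differing thirds, in each case the cancellations reduce the required inequality to a fixed nonzero residue modulo $8$: when $s=1$ the condition collapses to $4\not\equiv 0$; when $s=3$, or when $s=2$ with a differing third in the low bit, the two sides end up of opposite parity because $\delta_v[3]\oplus\delta_u[3]=1$. The remaining $s=2$ sub-cases, together with sub-cases on the orderings within each differing third, are mechanical verifications, and I expect the bookkeeping of the $s=2$ case with agreement in the low bit to be the main obstacle---though each sub-case is a one-line check. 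Optimality of $|\mathbb{F}|=9$ follows because the $\beta$-exponents fill $[0,7]$, forcing $\mathrm{ord}(a)\ge 8$.
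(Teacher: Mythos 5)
Your approach is correct and essentially the one taken in the paper: the ratio count via Lemma~\ref{lemma 3} is identical, and the MDS argument reduces in the same way to the inequality $rM_u^T + r'M_u^T \not\equiv rM_v^T + r'M_v^T \pmod 8$, analyzed via the structure of the flip vectors $(u+v)M_u, (u+v)M_v$. The paper's bookkeeping is lighter than yours: rather than splitting on the number $s$ of differing thirds and subdividing by orderings, it does a single carry-chain scan from the least significant bit ($l$-third) to the most significant bit ($i$-third), stopping at the first position where $u$ and $v$ differ---there the two sides' bit contributions have opposite parity while the incoming carries are equal, so the sums differ in that bit and hence mod $8$. This kills the $s=2$ case you flag as the main obstacle in one stroke, so you may find it cleaner to adopt. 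One small correction to your sketch: your claim that the $s=1$ case always collapses to $4\not\equiv 0$ holds only when the differing third is the most significant one; when the middle third is the one that differs, the residue works out to $\pm 2 \pmod 8$. The conclusion (nonzero) is still fine, but the residue is not fixed.
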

\begin{IEEEproof}
For each vector $v \in S$, there are $3(m/3-1)^2$ vectors $u\in S$ such that they have one $1$ in the same location as $v$, i.e. $|B_v\backslash B_u|=2$. Hence by Theorem \ref{th:123} and Lemma \ref{lemma 3}, for large $m$ the ratio is
$$\frac{1}{2}+\frac{3((\frac{m}{3})-1)^2}{2(\frac{m^3}{27}+1)} \approx \frac{1}{2} + \frac{9}{2m}$$.

Now we show that the code is indeed MDS. Consider columns $c(u),c(v)$ for some $u=\{i_1,j_1,l_1\} \neq v=\{i_2,j_2,l_2\}$ and $i_1, i_2 \in [1,m/3],j_1, j_2 \in [m/3+1,2m/3], l_1,l_2 \in [2m/3+1,m]$. And consider rows $r$ and  $r'=r+u+v$. The same as proof of Theorem \ref{thm6}, we know that the $4$ elements in these rows and columns are in two row sets and two zigzag sets. Moreover, we know from that proof that if $l_1 \neq l_2$, say $l_1 < l_2$, then
\begin{equation} \label{eq17}
rM_{l_1}^T \neq  r'M_{l_1}^T
\end{equation}
 and
\begin{equation} \label{eq18}
rM_{l_2}^T = r'M_{l_2}^T.
\end{equation}
If $l_1 = l_2$, then
\begin{equation} \label{eq19}
 rM_{l_1}^T = r'M_{l_1}^T = rM_{l_2}^T =r'M_{l_2}^T
\end{equation}
And similar results hold for $i_1,i_2$ and $j_1,j_2$.
Next we check if \eqref{eq1} is satisfied, which is equivalent to check
\begin{equation} \label{eq16}
rM_{u}^T +  r'M_{u}^T \neq rM_{v}^T +  r'M_{v}^T \mod 8
\end{equation}
by Construction \ref{k/3}. Here we view each vector of length $3$ as an integer in $[0,7]$ and the addition is ordinary addition of integers instead of bitwise XOR.

We do the addition in \eqref{eq16} from the lsb (least significant bit) to the msb (most significant bit). If the sum in a bit is more than $1$, the \emph{carry} is $1$ in the next bit. Notice that mod $8$ only changes the fourth bit and will not affect the first 3 bits.

If $l_1 \neq l_2$, the lsb in the sum in \eqref{eq16} are different on the two sides  by \eqref{eq17} \eqref{eq18}, and we are done. Assume $l_1 = l_2$, by \eqref{eq19} the lsb in the sum in \eqref{eq16} are $0$ on both sides, and the carry in the second bit are equal. In this case, if $j_1 \neq j_2$, since the carry are the same, by similar reason \eqref{eq16} holds. If $j_1=j_2$, again the second bit are equal and the carry in the third bit are equal. In the latter case $i_1 \neq i_2$ since $u \neq v$, and we have the third bit are different. Therefore, \eqref{eq16} is satisfied for any two vectors in $T$, and the code is MDS.
\end{IEEEproof}

Notice that if we do mod $15$ in \eqref{eq16} instead of mod $8$, the proof still follows because $15$ is greater than the largest possible sum in the equation. Therefore, a field of size $16$ is also sufficient to construct an MDS code, and it is easier to implement in a storage system.

Suppose $w<m$ and $w|m$. Construction \ref{k/3} can be easily generalized to the set of vectors $T$, such that any $v \in T$ satisfies $\|v\|_1=w$, $v_{i_j}=1$, for some $i_j \in [jm/w+1,(j+1)m/w]$, $j \in [0,\dots,w-1]$. The $(k+2,k)$ MDS code has array size $2^m \times (k+2)$ with
$$k=(\frac{m}{w})^w,$$
and uses a finite field of size
$$q = 2^w+1$$
if $2^w+1$ is a power of a prime. In all cases, $q=2^{w+1}$ suffices to make the code MDS. Moreover, when $w$ is odd and small compared to $m$, the ratio is
$$\frac{1}{2}+\frac{w^2}{2m}$$
for large $m$.

%
%
\section{Decoding of the Codes} \label{sec:dec}
%
%
In this section, we will discuss decoding algorithms of the constructed codes in case of column erasures as well as a column error. The algorithms work for both Construction \ref{cnstr1} and its duplication code.

Let $\cC$ be a $(k+2,k)$ MDS array code defined by Construction \ref{cnstr1} (and possibly duplication). The code has array size $2^m \times (k+2)$. Let the zigzag permutations be $f_j$, $j \in [0,k-1]$, which are not necessarily distinct.
Let the information elements be $a_{i,j}$, and the row and zigzag parity elements be $r_{i}$ and $z_{i}$, respectively, for $i\in [0,2^m-1],j \in [0,k-1]$.
W.l.o.g. assume the row coefficients are $\alpha_{i,j}=1$ for all $i,j$. And let the zigzag coefficients be $\beta_{i,j}$ in some finite field $\mathbb{F}$.

The following is a summary of the erasure decoding algorithms mentioned in the previous sections.
\begin{alg}(Erasure Decoding)\\
{\bf One erasure.} \\
1) One parity node is erased. Rebuild the row parity by
\begin{equation} \label{eq10}
r_{i}=\sum_{j=0}^{k-1}a_{i,j},
\end{equation}
and the zigzag parity by
\begin{equation} \label{eq11}
z_{i}=\sum_{j=0}^{k-1}\beta_{f_j^{-1}(i),j} a_{f_j^{-1}(i),j}.
\end{equation}
2) One information node $j$ is erased. Rebuild the elements in rows $X_j$ (see Construction \ref{cnstr1}) by rows, and those in rows $\overline{X_j}$ by zigzags. \\
{\bf Two erasures.} \\
1) Two parity nodes are erased. Rebuild by \eqref{eq10} and \eqref{eq11}.\\
2) One parity node and one information node is erased. If the row parity node is erased, rebuild by zigzags; otherwise rebuild by rows.\\
3) Two information node $j_1$ and $j_2$ are erased. \\
- If $f_{j_1}=f_{j_2}$, for any $i \in [0,2^m-1]$, compute
\begin{eqnarray} \label{eq12}
x_i&=&r_{i}-\sum_{j \neq j_1,j_2} a_{i,j} \nonumber \\
y_i&=&z_{f_{j_1}(i)} - \sum_{j \neq j_1,j_2}\beta_{f_j^{-1}f_{j_1}(i),j} a_{f_j^{-1}f_{j_1}(i),j}\nonumber
\end{eqnarray}
Solve $a_{i,j_1}, a_{i,j_2}$ from the equations
$$\left[
\begin{array}{ll}
	1 & 1 \\
	\beta_{i,j_1} & \beta_{i,j_2}
\end{array}
\right]
\left[
\begin{array}{l}
a_{i,j_1} \\
a_{i,j_2}
\end{array}
\right]
= \left[
\begin{array}{l}
x_i \\
y_i
\end{array}
\right].
$$
- Else, for any $i \in [0,2^m-1]$, set $i'=i+f_{j_1}(0)+f_{j_2}(0)$, and compute $x_i,x_{i'},y_i,y_{i'}$ according to \eqref{eq12}. Then solve $a_{i,j_1},a_{i,j_2},a_{i',j_1},a_{i',j_2}$ from equations
$$
\left[\begin{array}{cccc}
1 & 1 & 0 & 0 \\
0 & 0 & 1 & 1 \\
\beta_{i,j_1} & 0 & 0 & \beta_{i',j_2} \\
0 & \beta_{i,j_2} & \beta_{i',j_1} & 0
\end{array} \right]
\left[ \begin{array}{c}
a_{i,j_1} \\
a_{i,j_2} \\
a_{i',j_1} \\
a_{i',j_2} \\
\end{array} \right]
= \left[ \begin{array}{c}
x_i \\
x_{i'} \\
y_i \\
y_{i'}
\end{array} \right].
$$
\end{alg}

In case of a column error, we first compute the syndrome, then locate the error position, and at last correct the error.
Let $x_0,x_1,\dots,x_{p-1} \in \mathbb{F}$. Denote $f^{-1}(x_0,x_1,\dots,x_{p-1})=(x_{f^{-1}(0)},x_{f^{-1}(1)},\dots,x_{f^{-1}(p-1)})$ for a permutation $f$ on $[0,p-1]$.
The detailed algorithm is as follows.

\begin{alg} (Error Decoding) \\
 Compute for all $i \in [0,2^m-1]$:
\begin{eqnarray*}
S_{i,0}&=&\sum_{j=0}^{k-1}a_{i,j}-r_{i} \\
S_{i,1}&=&\sum_{j=0}^{k-1}\beta_{f_j^{-1}(i),j} a_{f_j^{-1}(i),j} -z_i
\end{eqnarray*}
Let the syndrome be $S_{0}=(S_{0,0},S_{1,0},\dots,S_{2^m-1,0})$ and $S_{1}=(S_{0,1},S_{1,1},\dots,S_{2^m-1,1})$. \\
- If $S_0=0$ and $S_1=0$, there is no error. \\
- Else if one of $S_0, S_1$ is $0$, there is an error in the parity. Correct it by \eqref{eq10} or \eqref{eq11}.\\
- Else, find the error location. For $j = 0$ to $k-1$: \\
\tab Compute for all $i \in [0,2^m-1]$, $x_{i} = \beta_{i,j}S_{i,0}.$ \\
\tab Let $X=(x_0,\dots,x_{2^m-1})$ and $Y=f_j^{-1}(X)$. \\
\tab If $Y=S_1$, subtract $S_0$ from column $j$. Stop.\\
If no such $j$ is found, there are more than one error.
\end{alg}

If there is only one error, the above algorithm is guaranteed to find the error location and correct it, since the code is MDS. If the computations are done in parallel for all $i \in [0,2^m-1]$, then this algorithm can be done in time $O(k)$. Moreover, since the permutations $f_i$'s only change one bit of a number in $[0,2^m-1]$ in Theorem \ref{orthogonal-permutations}, the algorithm can be easily implemented.

%
%
\section{Generalization of the code construction}
\label{generalization}
%
%
In this section we generalize Construction \ref{cnstr1} into arbitrary number of parity nodes. Let $n-k=r$ be the number of parity nodes, we will construct an $(n,k)$ MDS array code, i.e., it can recover up to $r$ nodes erasures for arbitrary integers $n,k$. We assume that each systematic node stores $\frac{\cM}{k}$ of the information and is stored in columns $[0,k-1]$. The $i$-th parity node is stored in column $k+i$, where $0\leq i\leq r-1$ and is comprised of zigzag sets $\{Z_{j}^i\}$.
\begin{cnstr}
\label{cnstr5}
Let $T$ be a subset of vector of $\mathbb{Z}_r^m$ for some integer $m$, where for each
\begin{equation}
v=(v_1,...,v_m)\in T,\gcd(v_1,...,v_m,r)=1.
\label{eq:45678}
\end{equation}
We will construct for each integer $i,0\leq i\leq r-1$ a set of permutations $B_T^i=\{f_v^i:v\in T\}$ where each permutation acts on the set $[0,r^m-1]$. By abuse of notations we use $x$ both to represent the integer and its $r$-ary representation and all the calculations are done over $\mathbb{Z}_r$. Define $f_v^i(x)=x+iv$, for example if $r=3, x=5, i=2, v=(0,1)$
$$f_{(0,1)}^2(5)=5+2(0,1)=(1,1)+(0,2)=(1,0)=3.$$
We define the zigzag $z_m^s$ in parity node $s,0\leq s \leq r-1$ and $m\in [0,r^m-1]$, as the linear combination of the elements $a_{i,j}$ such that their coordinates  satisfy
$$f^s_{v_j}(i)=m.$$
Note that these parity nodes described now, form an $(n,k)$ MDS array code under appropriate selection of the coefficients in the linear combinations of the zigzag. The proof follows the exact same lines as in Theorem \ref{zigzag-sets} and therefore is omitted. In a rebuilding of systematic node $i$ the elements in rows $X^s_i=\{v\in [0,r^m-1]:v\cdot v_i=r-s\}$ are rebuilt by parity node $s$ where $0\leq s\leq r-1$. From \eqref{eq:45678} we get that for any $i$ and $s$,
$|X^s_i|=r^{m-1}.$
\end{cnstr}

Let the information array be $A=(a_{i,j})$ with size $r^m \times |T|$.
Denote by $c(u)$ the column index corresponding to vector $u \in T$.
Define $f_u^{-i}(x) = x-iu$, which is the inverse permutation of $f_u^i$, $0 \le i \le r-1$.
For any $x \in [0,r^m-1]$, the $x$-th zigzag set in  the $i$-th parity node is $Z_x^i=\{a_{f_u^{-i}(x),c(u)}: u \in T\}.$
In a rebuilding of column $c(v)$, by Construction \ref{cnstr5}, we need to access the elements in the zigzag sets
$Z_x^i$ for any $x\in f_v^i(X_v^i)$ and $i\in [0,r-1]$. Hence, in column $c(u)$, $u \neq v$, we access elements in rows
\begin{equation}
\cup_{i=0}^{r-1} f_u^{-i} f_v^{i} (X_v^i).
\label{eq:1212}
\end{equation}
In order to get a low rebuilding ratio, we want that the union of \eqref{eq:1212} over all the columns $c(u),u\neq v$ to be as small as possible.

In particular, for a given $u$ if $f_u^{-i} f_v^{i}(X_v^i)$ are identical for any $0 \le i \le r-1$, we need only to access  $|f_u^{-i} f_v^{i}(X_v^i)|=r^{m-1}$ elements from column $c(u)$. Notice that in this case, for all $0 \le i \le r-1$,
\begin{equation} \label{orth:gen}
f_u^i(X_v^0) = f_v^i (X_v^i),
\end{equation}
simply because $f_v^0$ is the identity permutation for any vector $v$. We say the permutation set $\{f_u^i\}_{0 \le i \le r-1}$ is \emph{orthogonal} to the permutation set $\{f_{v}^{i}\}_{0 \le i \le r-1}$ if (\ref{orth:gen}) holds for all $i$.

In general, we want to know the relation of $f_u^{-i} f_v^{i}(X_v^i)$, $0 \le i \le r-1$. And the following lemma shows they are either identical or disjoint.

\begin{lem} \label{lem:orth}
For any $u,v$,
\begin{displaymath}
   |f_u^{-i} f_v^{i}(X_v^i) \cap f_u^{-j} f_v^{j}(X_v^j)|
=  \left\{
\begin{array}{ll}
|X_v^0|, & (i-j)c_{v,u}=0 \\
0,       & \text{o.w.}
\end{array}
\right.
\end{displaymath}
Where $c_{v,u}=v\cdot(v-u)-1$. In particular for $j=0$ we get
\begin{displaymath}
|f_u^i(X_v^0) \cap f_v^i (X_v^i)| = \left\{
\begin{array}{ll}
|X_v^0|, & \text{if } ic_{v,u}=0 \\
0,       & \text{o.w.}
\end{array}
\right.
\end{displaymath}
 \end{lem}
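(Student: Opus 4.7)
The plan is to make everything explicit by substituting the definition $f_v^i(x)=x+iv$ into the expression $f_u^{-i}f_v^{i}(X_v^i)$ and re-identifying the result as a coset of a fixed additive hyperplane in $\mathbb{Z}_r^m$. Once both sets in the intersection are written as cosets of the \emph{same} subgroup, the dichotomy in the lemma becomes immediate from standard coset theory.

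Concretely, I would first compute, for any $x$,
\[
 f_u^{-i}f_v^{i}(x) \;=\; (x+iv)-iu \;=\; x+i(v-u),
\]
so $f_u^{-i}f_v^{i}$ is the shift by $i(v-u)$. Next, starting from $X_v^i=\{w:w\cdot v\equiv -i\}$ (working modulo $r$ throughout), set $y=w+i(v-u)$ and substitute $w=y-i(v-u)$ into the defining congruence. This yields
\[
 y\cdot v \;\equiv\; -i + i(v-u)\cdot v \;=\; i\bigl((v-u)\cdot v-1\bigr) \;=\; i\,c_{v,u},
\]
and hence
\[
 f_u^{-i}f_v^{i}(X_v^i)\;=\;\{\,y\in\mathbb{Z}_r^m : y\cdot v\equiv i\,c_{v,u}\,\}.
\]
The analogous computation with $i$ replaced by $j$ gives $f_u^{-j}f_v^{j}(X_v^j)=\{y:y\cdot v\equiv j\,c_{v,u}\}$.

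Both sets are level sets of the same linear functional $y\mapsto y\cdot v$ on $\mathbb{Z}_r^m$, i.e.\ cosets of the subgroup $H=\{y:y\cdot v\equiv 0\}=X_v^0$. Two such cosets are either equal or disjoint, and they are equal precisely when their labels agree, i.e.\ when $i\,c_{v,u}\equiv j\,c_{v,u}\pmod r$, which is the condition $(i-j)c_{v,u}=0$ from the statement. The hypothesis $\gcd(v_1,\ldots,v_m,r)=1$ from \eqref{eq:45678} ensures that the linear form $y\cdot v$ is surjective onto $\mathbb{Z}_r$, so every nonempty level set has size $r^{m-1}=|X_v^0|$. Combining these observations gives exactly the case analysis in the lemma, and the $j=0$ specialization follows by setting $j=0$ and noting that $f_v^0$ is the identity, so $f_u^{-0}f_v^{0}(X_v^0)=X_v^0$ and the second set becomes $f_u^i(X_v^0)=f_v^i(X_v^i)$.

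The only mildly delicate point is the bookkeeping in the change of variables (getting the sign of $i$ and the identity $(v-u)\cdot v-1=c_{v,u}$ right); there is no real obstacle, since the dichotomy reduces to the trivial fact that cosets of a subgroup either coincide or are disjoint.
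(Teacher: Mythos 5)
Your argument is correct and is essentially the same as the paper's: both identify $X_v^0$ as a subgroup of $\mathbb{Z}_r^m$, write each set $f_u^{-i}f_v^i(X_v^i)$ as a coset of it, and invoke the coset dichotomy (equal or disjoint) together with the $\gcd$ condition to get size $r^{m-1}$; your version just labels the cosets explicitly by the value of the linear form $y\mapsto y\cdot v$ rather than by arbitrary representatives as the paper does. One small wording slip: in your last sentence, $f_u^i(X_v^0)=f_v^i(X_v^i)$ is not an identity but precisely the condition the lemma characterizes; what you actually use is that applying the bijection $f_u^i$ to both sets in $f_u^{-i}f_v^i(X_v^i)\cap X_v^0$ preserves cardinality and yields the intersection $f_v^i(X_v^i)\cap f_u^i(X_v^0)$ appearing in the statement.
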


\begin{IEEEproof}
Consider the group $(\mathbb{Z}_r^m,+)$. Let $a_v^i\in X_v^i$ and $a_v^j\in X_v^j$. Note that $X_v^0$ is a subgroup of $\mathbb{Z}_r^m$ and $X_v^i=X_v^0+a_v^i$ is its coset. Hence the cosets $f_u^{-i} f_v^{i}(X_v^i)=X_v^0+a_v^i+i(v-u)$ and
$f_u^{-j} f_v^{j}(X_v^j)=X_v^0+a_v^j+j(v-u)$ are either identical or disjoint. Moreover they are identical if and only if
$$a_v^i-a_v^j+(i-j)(v-u)\in X_v^0,$$
i.e., $(i-j)\cdot c_{v,u}=0$ and the result follows.
\end{IEEEproof}

\begin{thm}
\label{mashumashu}
The permutations set $\{f_l^i\}$ and sets $X_l^i$, for $0 \le l \le m, 0 \le i \le r-1$ constructed by the vectors $\{e_i\}_{i=0}^m$ and Construction \ref{cnstr5} where $X_0^i$ is modified for any $i$ to be $X_0^i=\{x\in \mathbb{Z}_r^m:x\cdot (1,1,...,1)=i\}$, is a set of orthogonal permutations. Moreover the $(m+1+r,m+1)$ MDS array code of array size $r^m\times (m+1+r)$ defined by these permutations has \emph{optimal} ratio of $\frac{1}{r}.$
\end{thm}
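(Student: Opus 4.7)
The plan is to verify the two claims separately: (a) the permutation sets $\{f_{e_l}^i\}_{i=0}^{r-1}$ for $l \in [0,m]$ are pairwise orthogonal in the sense of equation (\ref{orth:gen}) (extended appropriately to account for the modification of $X_0^i$); and (b) given (a), a direct count shows that rebuilding any systematic column requires access to exactly a $1/r$ fraction of the surviving entries, which matches the lower bound obtained from (\ref{eq:tradeoff}).

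For claim (a), I would separate the analysis into two cases. \emph{Case 1: both indices lie in $[1,m]$.} For distinct $j,l \in [1,m]$, apply Lemma \ref{lem:orth} with $v=e_l$, $u=e_j$; the key computation is
\[
c_{e_l,e_j} \;=\; e_l \cdot (e_l - e_j) - 1 \;=\; 1 - 0 - 1 \;=\; 0 \in \mathbb{Z}_r,
\]
so $ic_{e_l,e_j}=0$ for every $i$, and the lemma forces $f_{e_j}^i(X_{e_l}^0)=f_{e_l}^i(X_{e_l}^i)$ for all $i$, which is exactly (\ref{orth:gen}). \emph{Case 2: one of the two indices is $0$.} Here $f_{e_0}^i$ is the identity and $X_{e_0}^i$ has been redefined as $\{x : x \cdot (1,\dots,1) = i\}$, so Lemma \ref{lem:orth} does not apply directly; instead I would verify (\ref{orth:gen}) by hand. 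If $v=e_0$ and $u=e_j$ with $j \ge 1$, then
\[
f_{e_j}^i(X_{e_0}^0) \;=\; X_{e_0}^0 + i e_j \;=\; \{y : y\cdot(1,\dots,1)=i\} \;=\; X_{e_0}^i \;=\; f_{e_0}^i(X_{e_0}^i),
\]
and symmetrically, if $v=e_l$ with $l \ge 1$ and $u=e_0$, then both sides of (\ref{orth:gen}) equal $\{y : y_l = 0\}$. This exhausts all pairs.

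For claim (b), fix a systematic column $c(v)$ to be rebuilt. By Construction \ref{cnstr5} the elements in the $|X_v^s|=r^{m-1}$ rows of $X_v^s$ are recovered from the $s$-th parity, for $s \in [0,r-1]$. In each surviving systematic column $c(u)$ the set of accessed rows is $\bigcup_{i=0}^{r-1} f_u^{-i}f_v^i(X_v^i)$; but orthogonality established in (a) is equivalent (after composing by $f_u^{-i}$) to saying that the $r$ cosets in this union collapse to a single coset, so only $r^{m-1}$ rows are touched in column $c(u)$. Each of the $r$ parity columns contributes exactly $|X_v^s|=r^{m-1}$ accesses. The total number of accesses is therefore $r^{m-1}(m+r)$, and dividing by the surviving array size $r^m(m+r)$ gives the ratio $1/r$, matching the lower bound $1/r$ that follows from specializing (\ref{eq:tradeoff}) to $n=m+1+r$, $k=m+1$. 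Existence of linear coefficients making the code MDS is granted by the remark in Construction \ref{cnstr5} (an analog of Theorem \ref{zigzag-sets} for $r$ parities), so we need not revisit it.

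The only subtle step is the $e_0$ case, because the standard identity $X_0^s = \{x:x\cdot 0 = -s\}$ degenerates, and one must check that replacing $X_0^i$ by the $\sum x_l = i$ hyperplanes still preserves orthogonality both when $e_0$ plays the role of the erased column (where the identity permutation is benign because the cosets are translated correctly by $e_j$) and when it plays the role of a surviving column (where the rebuilt rows in $X_{e_l}^i$ map under $f_{e_l}^i$ to a set independent of $i$, hence lie in the single set $\{y:y_l=0\}$). Once this bookkeeping is clean, the rest of the argument is a direct count.
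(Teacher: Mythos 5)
Your proposal is correct and follows essentially the same route as the paper's own proof: both handle the pair $l,n \ge 1$ via Lemma~\ref{lem:orth} and the computation $c_{e_l,e_n}=0$, both verify the two directions of orthogonality involving the modified $X_0^i$ by direct set computation, and both conclude with the $r^{m-1}$-per-surviving-column count giving ratio $1/r$. The only difference is cosmetic: you spell out the total access count $r^{m-1}(m+r)$ explicitly and note the matching lower bound from \eqref{eq:tradeoff}, which the paper leaves implicit.
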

\begin{IEEEproof}
For $1 \le l,n \le m$, $c_{l,n}=e_l \cdot(e_l-e_n)-1 = 0$, so by Theorem \ref{lem:orth}, $\{f_n^i\}_{0 \le i \le r-1}$ is orthogonal to $\{f_l^i\}_{0 \le i \le r-1}$. For $1 \le n \le m$, and all $0 \le i \le r-1$,
$$f_0^i(X_n^0) = X_n^0=\{x:x_n=0\}$$
and
\begin{eqnarray*}
f_n^i(X_n^i) &=& f_n^i(\{x:x_n=-i\})=\{x+ie_n:x_n=-i\} \\
&=&\{y:y_n=0\}
\end{eqnarray*}
Therefore, $\{f_0^i\}_{0 \le i \le r-1}$ is orthogonal to $\{f_n^i\}_{0 \le i \le r-1}$. Similarly,
$$f_n^i(X_0^0)=f_n^i(\{x:x\cdot(1,\dots,1)=0\})=\{x:x\cdot(1,\dots,1)=i\}$$
and
$$f_0^i(X_0^i)=X_0^i=\{x:x\cdot(1,\dots,1)=i\}$$
for all $i$. Thus, $\{f_n^i\}_{0 \le i \le r-1}$ is orthogonal to $\{f_0^i\}_{0 \le i \le r-1}$. Therefore, in order to rebuild one systematic node, we access only $r^{m-1}$ elements in each surviving node.
Hence the ratio is $1/r$.
\end{IEEEproof}

The following theorem gives the ratio for any code of Construction \ref{cnstr5}.

\begin{thm} \label{thm: gen rate}
The ratio for the code constructed by Construction \ref{cnstr5} and set of vectors $T$  is
$$\frac{\sum_{v \in T} \sum_{u \neq v \in T} \frac{1}{\gcd(r,c_{v,u})} + |T| }{|T|(|T|-1+r)}$$
where $\gcd$ is the greatest common divisor.
\end{thm}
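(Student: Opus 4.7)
\begin{IEEEproof}[Proof plan]
The plan is to directly count the elements accessed when rebuilding a single systematic column using the scheme prescribed in Construction \ref{cnstr5}, then sum over all columns and divide by the total amount of surviving data.

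First, I fix a systematic column $c(v)$ and count how many rows must be read in each surviving column. The scheme tells us that the rows accessed in a surviving systematic column $c(u)$, $u\neq v$, are precisely the elements of the union
\[
U_{v,u}=\bigcup_{i=0}^{r-1} f_u^{-i}f_v^{i}(X_v^{i}).
\]
By Lemma \ref{lem:orth}, the sets $f_u^{-i}f_v^{i}(X_v^{i})$ for $i\in[0,r-1]$ are either identical or pairwise disjoint, and two of them coincide exactly when $(i-j)c_{v,u}\equiv 0\pmod{r}$. Hence the relation $i\sim j\iff (i-j)c_{v,u}\equiv 0\pmod r$ is an equivalence on $\mathbb{Z}_r$ whose classes are cosets of the subgroup $\{k\in\mathbb{Z}_r:kc_{v,u}\equiv 0\}$, a subgroup of order $\gcd(r,c_{v,u})$. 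Thus the number of \emph{distinct} cosets $f_u^{-i}f_v^{i}(X_v^{i})$ is $r/\gcd(r,c_{v,u})$. Since each such coset has size $|X_v^{0}|=r^{m-1}$, we obtain
\[
|U_{v,u}| \;=\; \frac{r}{\gcd(r,c_{v,u})}\cdot r^{m-1} \;=\; \frac{r^{m}}{\gcd(r,c_{v,u})}.
\]

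Next I handle the parity columns: for each parity node $s\in[0,r-1]$ the scheme accesses the $|X_v^{s}|=r^{m-1}$ zigzags indexed by $f_v^{s}(X_v^{s})$, giving a total of $r\cdot r^{m-1}=r^m$ elements from the $r$ parity columns. Summing these contributions, the number of elements accessed to rebuild column $c(v)$ is
\[
\sum_{u\in T,\,u\neq v}\frac{r^{m}}{\gcd(r,c_{v,u})}\;+\;r^{m}.
\]
Dividing by the total surviving data $r^{m}(|T|+r-1)$ gives the per-column ratio, and averaging over the $|T|$ systematic columns produces
\[
R \;=\; \frac{1}{|T|\,(|T|+r-1)}\left(\sum_{v\in T}\sum_{u\neq v}\frac{1}{\gcd(r,c_{v,u})}+|T|\right),
\]
which is the claimed expression.

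The only step that requires care is the coset-counting: one must avoid the off-by-one confusion between the size $\gcd(r,c_{v,u})$ of the stabilizing subgroup and the number $r/\gcd(r,c_{v,u})$ of cosets, and then correctly multiply by the per-coset size $r^{m-1}$ to recover the factor $1/\gcd(r,c_{v,u})$ appearing in the statement. All remaining steps are bookkeeping.
\end{IEEEproof}
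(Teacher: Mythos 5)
Your proof is correct and follows essentially the same route as the paper: both invoke Lemma \ref{lem:orth} to see that the sets $f_u^{-i}f_v^i(X_v^i)$ are disjoint-or-equal cosets, count the distinct ones as $r/\gcd(r,c_{v,u})$, multiply by $r^{m-1}$ per coset and by $r^{m-1}$ per parity node, and normalize. Your write-up is simply more explicit about the coset-counting that the paper compresses into one line.
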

\begin{IEEEproof}
By previous discussions and noticing that we access $r^{m-1}$ elements in each parity node, the ratio is
\begin{equation} \label{eq:union}
\frac{\sum_{v \in T} \sum_{u \neq v \in T} |\cup_{i=0}^{r-1} f_u^{-i} f_v^{i} (X_v^i)| + |T|r^m} {|T|(|T|-1+r) r^m}
\end{equation}
From Lemma \ref{lem:orth}, and noticing that $|\{i: ic_{v,u}=0 \mod r\}|=\gcd(r,c_{v,u})$, we get
$$|\cup_{i=0}^{r-1} f_u^{-i} f_v^{i} (X_v^i)|=r^{m-1}\times r /\gcd(r,c_{v,u}).$$
Thus, the theorem follows.
\end{IEEEproof}

When $c_{v,u}=0$ for all $v,u \in T$, the ratio is $1/r$ and is optimal according to (\ref{eq:tradeoff}). The next theorem states that the construction using standard basis has the most number of columns possible for an optimal ratio code.

\begin{thm}
For an orthogonal set of permutations $\{f_l^i\}_{0 \le l \le s-1, 0 \le i \le r-1}$ over the integers $[0,r^k-1]$, and the corresponding sets $\{X_l^i\}$, i.e., for all $0 \le m,l \le s-1,0 \le i \le r-1$,
$$f_m^i(X_l^0) = f_l^i (X_l^i)  $$
we have $s \le k+1$.
\end{thm}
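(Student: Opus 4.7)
I plan to prove this by induction on $k$, with $k = 0$ vacuous. The implicit data are $|X_l^i| = r^{k-1}$ and $\{X_l^i\}_i$ partitions $[0, r^k-1]$ for every $l$, which come from the rebuilding context behind Construction \ref{cnstr5}. For the inductive step, two operations preserve orthogonality: replacing every $f_l^i$ by $h_i \circ f_l^i \circ g$ and every $X_l^i$ by $g^{-1}(X_l^i)$, where $g$ is any permutation of $[0, r^k-1]$ and $h_i$ may depend on $i$. Choosing $g$ to send $[0, r^{k-1}-1]$ onto $X_0^0$ and $h_i := g^{-1}(f_0^i)^{-1}$ normalizes to $f_0^i = \mathrm{id}$ for all $i$ and $X_0^0 = [0, r^{k-1}-1]$. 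Plugging this into the hypothesis (read as $f_m^i(X_l^0) = f_l^i(X_l^i)$ for $m \ne l$, the only non-trivial case) yields three identities that drive the rest of the argument: (i) $f_m^i(X_0^0) = X_0^i$ for $m \ne 0$; (ii) $f_l^i(X_l^i) = X_l^0$ for $l \ne 0$ (taking $m = 0$); and (iii) $f_m^i(X_l^0) = X_l^0$ whenever $m \notin \{0, l\}$.

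The technical crux is the intersection lemma: $|X_l^0 \cap X_0^i| = r^{k-2}$ for every $l \ne 0$ and every $i$. The cases $s \le 2$ already give $s \le k+1$ when $k \ge 1$, so assume $s \ge 3$ and pick some $m \notin \{0, l\}$ in $[0, s-1]$. By (iii), $f_m^i$ preserves $X_l^0$ setwise, and by (i) it carries $X_0^0$ onto $X_0^i$; since $f_m^i$ is a bijection, $f_m^i(X_l^0 \cap X_0^0) = X_l^0 \cap X_0^i$. Thus $|X_l^0 \cap X_0^i|$ is independent of $i$, and summing over the partition $\{X_0^i\}_i$ against $|X_l^0| = r^{k-1}$ forces each intersection to have size $r^{k-2}$. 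Pulling the same computation through $f_l^i$ and using (i) and (ii) gives the companion identity $|X_l^i \cap X_0^0| = r^{k-2}$ for all $i$.

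With the lemma in hand, fix a distinguished index $l_0 = 1$ and define restricted permutations $g_l^i := f_{l_0}^{-i} \circ f_l^i|_{X_0^0}$ together with restricted sets $X_l^{i\ast} := X_l^i \cap X_0^0$ for $l \in [1, s-1]$. Each $g_l^i$ is a permutation of $X_0^0 \cong [0, r^{k-1}-1]$, and the lemma makes $\{X_l^{i\ast}\}_i$ a partition into $r$ equal parts of size $r^{k-2}$. A direct computation---splitting into cases $l = l_0$ and $l \ne l_0$, using (i)--(iii) and the derived facts $f_{l_0}^{-i}(X_0^i) = X_0^0$ and $f_{l_0}^{-i}(X_l^0) = X_l^0$ for $l \ne 0, l_0$---verifies $g_m^i(X_l^{0\ast}) = g_l^i(X_l^{i\ast})$ for all $m \ne l$ in $[1, s-1]$. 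Hence the $s-1$ permutation sets $\{g_l^i\}$ form an orthogonal family on $[0, r^{k-1}-1]$, and the inductive hypothesis gives $s - 1 \le k$, i.e., $s \le k + 1$. The main obstacle I anticipate is the intersection lemma, because the normalized orthogonality relations are asymmetric between index $0$ and the other indices and need to be kept straight; once the lemma is established the orthogonality check for the restricted structure is mechanical bookkeeping.
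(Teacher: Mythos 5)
Your proof is correct and takes essentially the same approach as the paper: induction on $k$ by normalizing $f_0^i$ to the identity, deriving the key identity $f_m^i(X_l^0 \cap X_0^0) = f_l^i(X_l^i \cap X_0^0)$, and restricting the orthogonal family to $X_0^0$. You conjugate by $(f_{l_0}^i)^{-1}$ to land back in $X_0^0$ whereas the paper instead normalizes $X_0^i$ to intervals and composes with the linear shift $y \mapsto y - ir^{k-1}$, and you make the $|X_l^i \cap X_0^0| = r^{k-2}$ size check explicit (the paper leaves it implicit), but these are minor bookkeeping differences within the same argument.
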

\begin{IEEEproof}
We prove it by induction on $k$.
When $k=0$, $s \le 1$.
Notice that for permutations $g,h_i$, $0 \le i \le r-1$, the set of permutations $\{h_i f_l^i g\}$ are still orthogonal under the sets $\{g^{-1}(X_l^i)\}$. This is because
$h_i f_m^i g (g^{-1}(X_l^0)) = h_i f_m^i(X_l^0) = h_i f_l^i (X_l^i) = h_i f_l^i g (g^{-1}(X_l^i))$.
Therefore, we can assume w.l.o.g. $X_0^{i}=[ir^{k-1},(i+1)r^{k-1}-1]$, and $f_0^{i}$ is identity, for $0 \le i \le r-1$.

Since $f_l^i(X_0^0)=f_0^i(X_0^i)=X_0^i$ for all $l \neq 0$, we have
$f_m^i(X_l^0 \cap X_0^0), f_l^i(X_l^i \cap X_0^0)$ $\subseteq$ $X_0^i$.
Similarly, for all $l \neq 0$,
$f_l^i(\overline{X_0^0})= \overline{f_l^i(X_0^0)} =\overline{X_0^i}$ and
$f_m^i(X_l^0 \cap \overline{X_0^0}), f_l^i(X_l^i \cap \overline{X_0^0})$ $\subseteq$ $\overline{X_0^i}$.
But $f_m^i(X_l^0) = f_l^i (X_l^i)$, which implies
$$f_m^i(X_l^0 \cap X_0^0)=f_l^i(X_l^i \cap X_0^0).$$
Thus the set of permutations $\{\hat{f}_l^i\}_{1 \le l \le s-1, 0 \le i \le r-1}$ over $[0,r^{k-1}-1]$, where $\hat{f}_l^i(x)=x-ir^{k-1}$  and sets $\{X_l^i \cap X_0^0\}$ are orthogonal. And by induction, $s-1 \le k$.
\end{IEEEproof}

The next theorem gives the finite field size of a code with $3$ parities.

\begin{thm}
For $r=3$ parity nodes the code constructed by the standard basis and the zero vector$\{e_i\}_{i=0}^m\in \mathbb{F}_3^m$ together with Theorem \ref{mashumashu}, a field of size at most $2(m+1)$ suffices to make it an $(m+4,m+1)$ MDS code.
\end{thm}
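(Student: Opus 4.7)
The plan is to emulate the strategy of Theorem~\ref{thm6} and Construction~\ref{cons3} in the three-parity setting: give an explicit coefficient assignment for the three zigzag columns and verify the MDS property by examining every erasure pattern of up to three columns. Erasures consisting only of parity columns are automatic, and erasures that include at most two systematic columns reduce, after deleting one of the three parity columns, to a two-parity MDS condition of the type analyzed in Theorem~\ref{thm6}; these reductions impose three pairwise $r=2$ conditions (one per choice of discarded parity). The essentially new case is three simultaneously erased systematic columns, which uses all three parities at once.

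Fix the erased columns as $c(e_{j_1}), c(e_{j_2}), c(e_{j_3})$ with $0 \le j_1 < j_2 < j_3 \le m$. Because the zigzag permutations $f^s_{e_j}(x)=x+se_j$ are pure coordinate shifts in $\mathbb{Z}_3^m$, the linear system decouples across cosets of $H = \langle e_{j_1}, e_{j_2}, e_{j_3}\rangle$ (the generator $e_0$ contributes trivially, so $|H|\in\{9,27\}$). Each coset gives an independent $3|H|\times 3|H|$ system indexed by $(s,\alpha_1,\alpha_2,\alpha_3)\in\mathbb{Z}_3^{4}$ on both sides. I would put this system in block-tensor form using the shift structure and then use the parity-$0$ (row-sum) equations to eliminate one unknown per row, reducing to a $2|H|\times 2|H|$ system involving only parities~$1$ and~$2$.

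For the coefficients I propose $\beta^0_{r,c(e_j)}=1$ and $\beta^s_{r,c(e_j)} = \gamma_j^{\,g(s,\,u_j\cdot r)}$ for $s=1,2$, generalizing Construction~\ref{cons3}; here $\gamma_0,\dots,\gamma_m$ are $m+1$ distinct nonzero elements of $\mathbb{F}$, $u_j=\sum_{t=0}^{j}e_t$, and $g$ is an auxiliary exponent taking two essentially distinct values in $\mathbb{Z}/(|\mathbb{F}^*|)$. Under this assignment, after the parity-$0$ reduction, the coset-level determinant should factor into a product of $2\times 2$ Vandermonde-like minors in $\gamma_{j_1},\gamma_{j_2},\gamma_{j_3}$, with the two rows coming from the parity-$1$ and parity-$2$ exponents. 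Nonvanishing follows from distinctness of the $\gamma_j$'s together with distinctness of the two parity exponents, which calls for $m+1$ column signatures admitting two independent powers apiece — hence roughly $|\mathbb{F}|\le 2(m+1)$.

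The hard step is showing that the $\binom{m+1}{3}$ possible erasure triples impose no constraints beyond the two clean distinctness conditions above; naively one would accumulate $\binom{m+1}{3}$ independent polynomial nonvanishings and a field size cubic in $m$. The proof must exploit the tensor structure inherited from the shifts $f^s_{e_j}$ to collapse all triple-dependent conditions into the single pair of requirements ``the $\gamma_j$ are distinct'' and ``the parity-$1$ and parity-$2$ exponents are distinct''. Identifying the right monomial form of $g$ so that the reduced minors are genuine Vandermondes in the $\gamma_j$'s — and handling the small technical modification of Theorem~\ref{mashumashu} for the column indexed by $e_0$, where $f_0^i$ is the identity but $X_0^i$ is partitioned by $x\cdot\mathbf{1}\bmod 3$ — are the places where the bound $2(m+1)$ is actually extracted.
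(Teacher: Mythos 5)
The approach you sketch is not the one the paper takes, and the gap you yourself flag at the end (``The hard step is showing that the $\binom{m+1}{3}$ possible erasure triples impose no constraints beyond...'') is exactly the part the paper resolves by a mechanism you haven't found. The paper does not decouple by cosets and hunt for local Vandermonde minors; it works globally. It encodes parity~$1$ of column $l$ as a coefficient-modified permutation matrix $A_l$, sets parity~$2$ to be $A_l^2$, and makes two observations that together kill all $\binom{m+1}{3}$ conditions at once: (i) with the chosen coefficients (entry $a^l$ exactly when $j\cdot e_l=0$, else $1$, for $a$ a primitive element) the matrices $A_0,\dots,A_m$ pairwise \emph{commute}, so the block generalized-Vandermonde matrix $[\,I;A_l;A_l^2\,]_{l}$ has determinant factoring into $\prod_{i<j}\det(A_j-A_i)$ by Silvester's block-determinant theorem \cite{Determinants}; and (ii) the coefficients are arranged so that each $3$-cycle of $f^1_{e_l}$ carries total coefficient product $a^l$, giving the algebraic identity $A_l^3=a^lI$. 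From (ii) one gets $\det(A_i-A_j)\neq 0$ from $\det(A^3-I)\neq 0$ with $A=A_iA_j^{-1}$, and the two-erasure case uses $A^6=a^{2(i-j)}I\neq I$, which is where $q\geq 2(m+1)$ is actually extracted.

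Your proposed coefficient form $\beta^s_{r,c(e_j)}=\gamma_j^{g(s,u_j\cdot r)}$ does not match this, and more importantly, nothing in your sketch produces a single identity like $A_l^3 = a^l I$ that makes every pairwise and triple condition reduce to ``the $a^l$ are distinct powers in a large enough field.'' Without the commuting-blocks observation you have no route to a global Vandermonde factorization, and without the order-$3$ identity you have no way to bound the field size independently of the erasure pattern; analyzing coset-by-coset as you propose would leave you with a genuinely cubic number of determinant nonvanishings and no way to collapse them. The idea of block-Vandermonde structure is in the right spirit, but the two load-bearing lemmas (commutativity of the $A_l$, and $A_l^3 = a^l I$) are missing, and the proof cannot be completed without them.
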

\begin{proof}
Let $\mathbb{F}_q$ be a field of size $q\geq 2(m+1)$, and $q$ is a prime. For any  $l\in [0,m]$ let $A_l=(a_{i,j}),$ be the permutation matrix defined by the permutation $f_{e_l}^1$ i.e. $a_{i,j}=1$ iff
$f_{e_l}^1(j)=i$, else $a_{i,j}=0$. Let $a$ be a primitive element of the field $\mathbb{F}_q$, and modify the nonzero entries of $A_l$ as follows, if $a_{i,j}=1$ and $j \cdot e_l=0$, modify it to $a_{i,j}=a^l.$

We will show that under this assignment of coefficients the matrices $A_l$ commute, i.e. for any $l_1 \neq l_2\in [0,m],A_{l_1}A_{l_2}=A_{l_2}A_{l_1}$. For simplicity, write $f_{e_{l_1}}=f_1,f_{e_{l_2}}=f_2$ and $p=3^m$. And write $A_{l_1}=(a_{i,j}),A_{l_2}=(b_{i,j})$.
For a vector $x=(x_0,\dots,x_{p-1})$ and $y = x A_{l_1}$,
its $j$-th entry satisfies $y_j=a_{f_1(j),j}x_{f_1(j)}$ for all $j \in [0,p-1]$. And by similar calculation, $z=x A_{l_1} A_{l_2} = y A_{l_2}$ will satisfy
$$z_j = b_{f_2(j),j}y_{f_2(j)}=
b_{f_2(j),j} a_{f_1(f_2(j)),f_2(j)}x_{f_1(f_2(j))}.$$
Similarly, if $w=x A_{l_2} A_{l_1}$, then
$$w_j = a_{f_1(j),j} b_{f_2(f_1(j)),f_1(j)}x_{f_2(f_1(j))}.$$
Notice that $$ f_1(j) \cdot e_{l_2} = (j+e_{l_1})e_{l_2}=j \cdot e_{l_2},$$
so  $b_{f_2(j),j}=b_{f_2(f_1(j)),f_1(j)}$. Similarly, $a_{f_1(j),j}=a_{f_1(f_2(j)),f_2(j)}$. Moreover, $$f_1(f_2(j))=f_2(f_1(j))=j+e_{l_1}+e_{l_2}.$$ Hence, $z_j=w_j$ for all $j$ and
$$x A_{l_1} A_{l_2} =z=w= x A_{l_2} A_{l_1}$$
for all $x \in \mathbb{F}_3^{m}$. Thus $A_{l_1}A_{l_2}=A_{l_2}A_{l_1}$.

Note that $f_{e_l}^2=(f_{e_l})^2$ hence define the set of matrices $A_l^2$ to be $A_l^2=(A_l)^2$. The code is MDS if it can recover from loss of any $3$ nodes. Hence With this assignment of coefficients the code is MDS iff any block submatrices of sizes $1\times1,2\times 2,3\times 3$ of the matrix
$$\left[
\begin{array}{llll}
	I & I &...& I\\
	A_0 & A_1 &...& A_m \\
	A_0^2 & A_1^2 &...& A_m^2
\end{array}
\right]
$$
are invertible. Let $0\leq i<j<k\leq m$ we will see that the matrix
$$\left[
\begin{array}{lll}
	I & I & I\\
	A_{i} & A_j & A_k \\
	A_{i}^2 & A_j^2 & A_k^2
\end{array}
\right]
$$
is invertible. By Theorem $1$ in \cite{Determinants} and the fact that  all the blocks in the matrix commute we get that the determinant equals to $ \det(A_k-A_j)\cdot\det(A_k-A_i)\cdot\det(A_j-A_i)$. Hence we need to show that for any $i>j$ that $\det(A_i-A_j)\neq 0$ which is equivalent to $\det(A_iA_j^{-1}-I)\neq 0.$ Note that for any $i$, $A_i^3=a^iI$. Denote by $A=A_iA_j^{-1}$, hence $A^3=(A_iA_j^{-1})^3=A_i^3A_j^{-3}=a^{i-j}I\neq I$. Therefore
$$0\neq\det(A^3-I)=\det(A-I)\det(A^2+A+I).$$
Therefore $\det(A-I)=\det(A_iA_j^{-1}-I)\neq 0$.
For the case when one of the three erased columns is a parity column it sufficient to check that for any $i>j$ that
$$\det(\left[
\begin{array}{ll}
	I & I \\
	A_j^2 & A_i^2
\end{array}
\right])=\det(A_j^{-2})\det(A_i^2A_j^{-2}-I)\neq 0.
$$
Note that $A^6=(A_iA_j^{-1})^6= a^{2(i-j)}I\neq I$ since $i-j\leq m< \frac{q-1}{2}.$ Hence
$$0\neq \det(A^6-I)=\det(A^2-I)(A^4+A^2+I),$$ and $\det(A^2-I)=\det(A_i^2A_j^{-2}-I)\neq 0$ which concludes the proof.
\end{proof}

%
\section{Concluding Remarks}
\label{summary}
%
%
In this paper, we described explicit constructions of the first known systematic $(n,k)$ MDS array codes with $n-k$ a constant (independent of $n$) and with an optimal rebuilding ratio. Specifically, the amount of information needed to rebuild an erased column equals to $1/(n-k)$, which matches the information-theoretic lower bound.

Here we provide additional observations and indicate a couple of open research problems.
 
\emph{The write complexity:} Consider the number of read accesses in the case of a write (update) operation.  For example, in an array code with two redundancies, in order to update a single information element, one needs to read at least three times and write three times. The reason is that in order to compute the new parity element, we need to know the values of the current information element and the two current parity elements. However, in our optimal code construction with two redundancies, if we update all the information in column 1 and the rows in the first half of the array (see Figure \ref{fig2}), it is not necessary to read any information or parity elements, since we know the values of all the information elements needed for computing the parity. These information elements take about half the size of the entire array - the idea is to cache the information to be written until most of the corresponding elements need update (we could arrange the information in a way that these elements are often updated at the same time), and the number of reading operations compared to the information size is very small. Similarly we can use the same approach for any other systematic column. In general, given $r$ redundancies, we could avoid read operations (in support of a write operation) if we make sure that we update about $1/r$ of the array at the same time.

\emph{Lowering the rebuilding ratio:} We note that one can add redundancy columns for the sake of lowering the rebuilding ratio. For instance, one can use three redundancy columns where the third column is not used for erasure correction - with three redundancy columns we need to access only $1/3$ of data instead of $1/2$ in the rebuilding of a single failed column.

\emph{Open problem:}
For a given array size, what is the rebuilding ratio for a code defined by arbitrary permutations? In Theorem \ref{thm:opt rate}, we showed that $1/2+1/k$ is optimal for a code constructed by binary vectors and duplication. However, this ratio is not known for arbitrary permutations.


\end{document}